\newcommand{\calB}{\mathcal{B}}
\newcommand{\bbZ}{\mathbb{Z}}
\newcommand{\bbC}{\mathbb{C}}
\newcommand{\bbT}{\mathbb{T}}
\newcommand{\EM}{\mathbf{EM}}
\newcommand{\PEM}{\mathbf{PEM}}
\newcommand{\id}{\operatorname{id}}
\newcommand{\Tr}{\operatorname{Tr}}
\newcommand{\Hom}{\operatorname{Hom}}
\newcommand{\Res}{\operatorname{Res}}
\newcommand{\Ind}{\operatorname{Ind}}
\newcommand{\Span}{\operatorname{span}}
\newcommand{\bra}[1]{\left\langle#1\right|}
\newcommand{\ket}[1]{\left|#1\right\rangle}
\theoremstyle{definition}
\newtheorem{definition}{Definition}[section]
\theoremstyle{plain}
\newtheorem{theorem}[definition]{Theorem}
\newtheorem{proposition}[definition]{Proposition}
\newtheorem{corollary}[definition]{Corollary}
\newtheorem{lemma}[definition]{Lemma}
\newtheorem{question}[definition]{Question}
\theoremstyle{remark}
\newtheorem{remark}[definition]{Remark}
\newtheorem{example}[definition]{Example}
\title[]{Projective error models: Stabilizer codes, Clifford codes, and weak stabilizer codes}
\date{Febraury 25, 2026}
\author[1]{Jonas Eidesen}
\address{Jonas Eidesen, Department of Mathematics, University of Oslo, P.O.Box 1053 Blindern, 0316 Oslo, Norway}
\email{jonaeid@math.uio.no}
\keywords{Quantum error correction, stabilizer codes, Clifford codes, representation theory, projective representation theory}
\subjclass[2020]{Primary: 81P70 ; Secondary: 81R99, 20C25, 20C35}
\begin{document}

\begin{abstract}
    By defining projective error models we study the mathematical structure of Clifford codes and stabilizer codes using tools from projective representation theory. Furthermore, we introduce a new class of codes which we have called weak stabilizer codes and we determine some relationships between these three classes of codes. We show that the obstruction for a stabilizer code to be non-trivial is given by a class in group cohomology, and we are able to determine similar obstructions for weak stabilizer codes to be non-trivial. In the case where the projective error model corresponds to a nice error basis we give a complete characterization of when a Clifford code is a weak stabilizer code in terms of the size of the group of logical operators and the size of the group of stabilizers of the code. Lastly, we produce two infinite families of Clifford codes that are not stabilizer codes, as well as a method of combining these examples into more examples of non-stabilizer Clifford codes.
\end{abstract}

\maketitle

\begin{center}
    This article is dedicated to the memory of Raymond Laflamme.
\end{center}

\section{Introduction}\label{sec:Introduction}

Quantum error correction is an important component of current scientific efforts towards developing practical realizations of quantum communications. In this article we undertake two central questions in quantum error correction. If we have a physical system that is susceptible to noise we want to try to embed our information into a subsystem, or a code space, that can protect from this noise. The central questions are then:
\begin{enumerate}
    \item Given a code space, what is the noise that this code space can correct?
    \item Given some noise we want to protect against, what is the largest code space that can detect this given noise?
\end{enumerate}
Framing these questions in a mathematically rigorous way can be challenging and is highly dependent on how one chooses to model a physical system. The Knill-Laflamme conditions, cf. \cite{KnillLaflamme97} gives an answer to the first question, which we will describe shortly. An answer to the second question for weak stabilizer codes will be described in \cref{sec:error correction}.

The most standard way to model a quantum mechanical system is with a complex Hilbert space $V$. For our purposes $V$ will always be assumed to be finite dimensional. A quantum mechanical \emph{state} can then be identified with a positive semi-definite operator $\sigma \in \calB(V) := \{ T \colon V \to V : T \text{ is linear} \}$ of trace $1$. Note that any vector $\xi \in V$ of norm $1$ gives rise to a state by considering the rank $1$ projection onto $\xi$, usually denoted $\ket{\xi}\bra{\xi}$. States of this form are called \emph{pure states}. Noise in the system can be modeled by a \emph{quantum channel} $N$ i.e. $N \colon \calB(V) \to \calB(V)$ is a completely positive and trace preserving linear map. With this model, a code space is simply given by a subspace $W$ of $V$. Question (1) above can then be stated as: for which quantum channels $N \colon \calB(V) \to \calB(V)$ does there exist a quantum channel $R \colon \calB(V) \to \calB(V)$ such that
\begin{equation*}
    R \circ N |_{\calB(W)} = \id_{\calB(W)}?
\end{equation*}
Here $\calB(W)$ is embedded into $\calB(V)$ in the natural way. This question is completely answered by the Knill-Laflamme conditions, cf. \cite{KnillLaflamme97}: if $\{K_i\}_{i=1}^{n} \subset \calB(V)$ are Kraus operators of $N$, i.e.
\begin{equation*}
    N(\sigma) = \sum_{i=1}^{n}K_i \sigma K_i^*,
\end{equation*}
and $\sum_{i=1}^{m}K_i^*K_i = \id_V$, then there exists a quantum channel $R$ as above if and only if 
\begin{equation}\label{eqn:KnillLaflamme Conditions 1}
    P_WK_i^*K_jP_W \in \bbC P_W
\end{equation}
for all $i,j \in \{1,2,\dots,n\}$. Here $P_W \in \calB(V)$ denotes the orthogonal projection onto $W$. The containment in \cref{eqn:KnillLaflamme Conditions 1} is usually referred to as the \emph{Knill-Laflamme conditions}.

We will take a slightly different approach. The Knill-Laflamme conditions tells us that it is essential to know which elements $X \in \calB(V)$ satisfy 
\begin{equation}\label{eqn:Knill-Laflamme conditions 2}
    P_W X P_W \in \bbC P_W.
\end{equation}
Elements in $\calB(V)$ that satisfy this condition are called \emph{detectable}. We will consider unitary operators on $V$ that arise as the image of a projectively faithful irreducible projective representation of a finite group. The main question we want to answer is: which of these unitaries are detectable? The reason for this is that a single unitary operator models a single error occurring in the system. To be explicit: if
\begin{equation*}
    U \in U(V):= \{ T \colon V \to V : T \text{ is linear and invertible with } T^{-1} = T^* \},
\end{equation*}
then we can consider the quantum channel $N_{U,p} \colon \calB(V) \to \calB(V)$ defined by
\begin{equation*}
    N_{U,p}(\sigma) = p\sigma + (1 - p) U \sigma U^*,
\end{equation*}
where $p \in (0,1)$ is the probability that no error occurs. Note that this channel is correctable, in the sense that there exists a quantum channel $R_{U,p} \colon \calB(V) \to \calB(V)$ such that
\begin{equation*}
    R_{U,p} \circ N_{U,p} |_{\calB(W)} = \id_{\calB(W)},
\end{equation*}
if and only if the element $U$ is detectable (this follows by the fact that the Knill-Laflamme conditions are preserved under taking adjoints). If $\pi \colon G \to U(V)$ is a projective representation (we will define what this means in \cref{sec:Preliminaries}) where the size of $G$ is $n$, and $p$ is a probability distribution on the set $\{1,\dots,n\}$, then we can define the quantum channel $N_{\pi,p} \colon \calB(V) \to \calB(V)$ by setting
\begin{equation*}
    N_{\pi,p}(\sigma) = \sum_{i = 1}^{n} p(i) \pi(x_i) \sigma \pi(x_i)^*,
\end{equation*}
where $\{x_1,\dots,x_n\}$ is an enumeration of $G$. It follows that the quantum channel $N_{\pi,p}$ is correctable if and only if $\pi(x_i)$ is detectable for every $x_i \in G$.

Generally, the collection of unitaries that are detectable will not form a group. However, the above construction gives a way to construct correctable channels from groups of detectable unitaries.

This article is structured as follows: All the relevant definitions of \emph{projective error models} and of detectable errors will be stated in \cref{sec:EMs and PEMs}. Our definition of a projective error model will essentially be the same as that of a \emph{nice error frame}, cf. \cite{ChienWaldron17}. However, we want to use the tools from projective representation theory to its full potential, and therefore choose to emphasize this aspect of the theory. We give the necessary preliminaries about projective representation theory in \cref{sec:Preliminaries}. In \cref{sec:codes} we define the three classes of codes we are interested in studying: stabilizer codes, Clifford codes, and weak stabilizer codes. We also prove some basic results about how these codes are related to each other. Our definition of a Clifford code generalizes the definition due to Knill, cf. \cite{KnillI96}. We give a description of the detectable errors of Clifford codes in \cref{sec:error correction} in terms of the stabilizers of the code and the logical operators on the code, cf. \cref{thm:detectable errors:Clifford code}. With this we are able to show that the structure of a Clifford code is completely determined by its group of logical operators, cf. \cref{cor:Clifford codes has unique inertia group}. This result is useful tool to determine when a code is a Clifford code. We also give a criteria to test if a code is a weak stabilizer code, cf. \cref{prop:stabilizer codes of whole stabilizer group}. In \cref{sec:weak stabilizer Clifford codes} we give a complete characterization of when a Clifford code is also a weak stabilizer code in the case where the projective error model is a \emph{nice error basis}, cf. \cite{KnillI96}. This recovers similar results by Klappenecker and R{\"o}tteler, and Nicolás, Martínez, and Grassl, cf. \cite{KlappeneckerRötteler04,GrasslMartínezNicolás10}. Some important examples of codes are covered in \cref{sec:non-stabilizer Clifford codes,sec:non-Clifford weak stabilizer codes}. The highlight is \cref{prop:family of non-stabilizer Clifford codes} which gives an infinite family of non-stabilizer Clifford codes in dimensions $2n$, with $n \geq 3$ being odd. We furthermore present a general construction for product codes, cf. \cref{sec:Product codes}. This construction yields infinitely many more examples of Clifford codes that are not stabilizer codes. In the final section, \cref{sec:further questions}, we discuss some natural questions that one could consider for future research.

\subsection*{Acknowledgments}
A lot of this work has been accomplished after many discussions with my supervisors Tron Omland, Erik B{\'e}dos, and Nadia S. Larsen, to whom I am extremely grateful. I would also like to thank Ningping Cao and Alexander Frei for useful discussions.

The first versions of this manuscript was finalized during a research visit at The Institute for Quantum Computing while I was visiting Raymond Laflamme's research group. The research visit was funded by a grant given by The Research Council of Norway.

This research was funded by The Research Council of Norway [Project 345433] and The Norwegian National Security Authority (NSM).

\section{Preliminaries}\label{sec:Preliminaries}

\subsection{Some notation for common groups}

For an abstract cyclic group of $n$ elements we will use the notation $C_n$. Since $C_n$ can faithfully be represented as the $n$th roots of unity, we will also use $C_n$ to denote precisely this, i.e.,
\begin{equation*}
    C_n := \{ z \in \bbC \colon z^n = 1 \}.
\end{equation*}
The group operation in $C_n$ is denoted as multiplication.

We will also use the notation $\bbZ_n$ to denote a cyclic group of $n$ elements. In this case $\bbZ_n$ is \emph{always} the set $\{0,1,2,\dots,n-1\}$. The group operation in $\bbZ_n$ is denoted as addition, and it is implicit that this is done modulo $n$.

We will use the notation $\bbT$ for the circle group, i.e.,
\begin{equation*}
    \bbT := \{ z \in \bbC \colon |z| = 1 \}.
\end{equation*}

If $G$ is a group we will use the notation $1_G$ to denote the identity element in $G$, or simply as $1$ when no confusion can arise. We will also use the notation $Z(G)$ to denote the \emph{center} of the group $G$ i.e.,
\begin{equation*}
    Z(G) := \{ x \in G : xy = yx \text{ for all } y \in G \}.
\end{equation*}

The order, or size, of a group $G$ will be denoted by $|G|$.

\subsection{Projective representation theory: definitions and notation}

We will only work with finite groups in this paper: whenever we write that $G$ is a group, we will assume that $G$ is finite. Furthermore, we will also assume that all representations of groups are unitary and on finite dimensional complex Hilbert spaces. Most of the statements we make about projective representations can be found in for example the book \cite[Chapters 7-9]{CeccheriniSilbersteinTullio22}.

A (linear) representation of a group $G$ on a nonzero Hilbert space $V$ is defined to be a group homomorphism $\lambda \colon G \to U(V)$, ($\lambda$ for linear). A projective representation of $G$ on $V$ is defined to be a function $\pi \colon G \to U(V)$, ($\pi$ for projective), such that the composition $q \circ \pi \colon G \to PU(V)$ is a group homomorphism. Here $PU(V)$ is the quotient of $U(V)$ by scalar multiples of the identity, and $q \colon U(V) \to PU(V)$ is the quotient map. We say that $\pi$ is \emph{projectively faithful} if the composition $q \circ \pi$ is injective. That the composition $q \circ \pi$ is a group homomorphism is equivalent to saying that there exists a function $\sigma \colon G \times G \to \bbT$ such that
\begin{equation*}
    \pi(x)\pi(y) = \sigma(x,y)\pi(xy), \text{ for all } x,y \in G.
\end{equation*}

\begin{remark}
    Note that this definition differs slightly from the one found in \cite{CeccheriniSilbersteinTullio22}, where it is required that
    \begin{equation*}
        \pi(xy) = \sigma(x,y)\pi(x)\pi(y), \text{ for all } x,y \in G.
    \end{equation*}
    This has the effect of conjugating the function $\sigma$ in our definition. In our experience the convention chosen in \cite{CeccheriniSilbersteinTullio22} occurs less frequently in the literature, hence we have made the decision to not use their definition. Our definition agrees with the one found in for example \cite{Cheng15}.
\end{remark}

To make the dependence on the function $\sigma$ explicit, we will often call the function $\pi$ for a $\sigma$-projective representation of $G$ on $V$. One easily shows that $\sigma$ satisfies the following equation for all $x,y,z \in G$:
\begin{equation*}
    \sigma(x,y)\sigma(xy,z) = \sigma(x,yz)\sigma(y,z).
\end{equation*}
This is called the 2-cocycle identity, and we denote the collection of all functions $G \times G \to \bbT$ that satisfy the 2-cocycle identity by $Z^2(G,\bbT)$. Elements in $Z^2(G,\bbT)$ are naturally called \emph{2-cocycles}.

We say that a 2-cocycle $\sigma \in Z^2(G,\bbT)$ is a \emph{2-coboundary} if there exists a function $f \colon G \to \bbT$ such that $\sigma = \delta f$ where $\delta f \colon G \times G \to \bbT$ is the function defined by
\begin{equation*}
    (\delta f)(x,y) = f(x)f(y)\overline{f(xy)}, \text{ for all } x,y \in G.
\end{equation*}
We denote the collection of all 2-coboundaries by $B^2(G,\bbT)$. Note that $Z^2(G,\bbT)$ becomes an abelian group when endowed with the pointwise product, and $B^2(G,\bbT)$ is a subgroup of $Z^2(G,\bbT)$.

We further say that two 2-cocycles $\sigma_1, \sigma_2 \in Z^2(G,\bbT)$ are \emph{cohomologous} if there exists a function $f \colon G \to \bbT$ such that $\sigma_1 = (\delta f) \, \sigma_2$. This defines an equivalence relation on $Z^2(G,\bbT)$, and we denote the collection of all equivalence classes under this equivalence for $H^2(G,\bbT)$. Note that
\begin{equation*}
    H^2(G,\bbT) = Z^2(G,\bbT) / B^2(G,\bbT),
\end{equation*}
which makes it clear that $H^2(G,\bbT)$ is an abelian group. For a 2-cocycle $\sigma \in Z^2(G,\bbT)$ we will use the notation $[\sigma]$ to denote its class in $H^2(G,\bbT)$. If the 2-cocycle $\sigma$ arises from a projective representation $\pi$ of $G$, then we will sometimes use the notation $[\sigma_\pi]$ to denote the class of $\sigma$ in $H^2(G,\bbT)$. A well-known but nontrivial fact is that $H^2(G,\bbT)$ is finite whenever $G$ is finite. We will make use of this fact in \cref{prop:correspondence between EM and PEM}.

It is at this point worth mentioning that all of this terminology has a natural place in what is called \emph{group cohomology}. We will need very few of the general tools from group cohomology, but the interested reader may consult \cite{Brown82}. Note that the connection between projective representation theory and group cohomology goes back to the late 1940s, while projective representation theory for finite groups was developed by Schur in the early 1900s. See \cite{Packer08} for a survey of projective representation theory where all these facts are mentioned.

If $\pi_i \colon G \to U(V_i)$, $i = 1,2$, are two $\sigma$-projective representations of $G$, we define an \emph{intertwiner} from $\pi_1$ to $\pi_2$ to be a linear map $T \colon V_1 \to V_2$ such that the diagram
\begin{equation*}
    \xymatrix{
        V_1 \ar[d]_-{\pi_{1}(x)} \ar[r]^-{T} & V_2 \ar[d]^-{\pi_{2}(x)} \\
        V_1 \ar[r]_-{T} & V_2
    }
\end{equation*}
commutes for all $x \in G$. We will often use the notation $T \colon \pi_1 \to \pi_2$ to emphasize that $T$ is an intertwiner and not only a linear map. The collection of all intertwiners between $\pi_1$ and $\pi_2$ is denoted by $\Hom_G(\pi_1,\pi_2)$, and it is clear that this has the natural structure of a complex vector space. We say that $\pi_1$ is isomorphic to $\pi_2$ if there exists an intertwiner $T \in \Hom_G(\pi_1,\pi_2)$ that is also a bijection. In the case where $\pi_1$ is isomorphic to $\pi_2$ we will sometimes write $\pi_1 \simeq \pi_2$.

We have that the collection of $\sigma$-projective representations of $G$ along with intertwiners form a category. This category can be identified with modules over $\bbC[G]^\sigma$, where $\bbC[G]^\sigma$ denotes the \emph{$\sigma$-twisted group algebra of $G$}. As a complex vector space it is defined to be the collection of functions $F \colon G \to \bbC$. For $y \in G$, let $\delta_y \colon G \to \bbC$ be defined by
\begin{equation*}
    \delta_y(x) =
    \begin{cases}
        1 & \text{if } x = y, \\
        0 & \text{if } x \neq y.
    \end{cases}
\end{equation*}
Then $\{\delta_y\}_{y \in G}$ form a basis for $\bbC[G]^\sigma$. Sometimes, when it is clear from context, we will simply use the notation $y = \delta_y$. For two functions $F_1,F_2 \in \bbC[G]^\sigma$, we define their product by the following $\sigma$-twisted convolution product:
\begin{equation*}
    (F_1 *_\sigma F_2)(x) = \sum_{y \in G} \sigma(y,y^{-1}x) F_1(y)F_2(y^{-1}x).
\end{equation*}
One confirms, with some effort, that this turns $\bbC[G]^\sigma$ into an algebra. In the special case where $F_2 = y$ we have that
\begin{equation*}
    (F_1 *_\sigma y)(x) = \sigma(xy,y)F_1(xy).
\end{equation*}
Similarly, when $F_1 = y$ we have that
\begin{equation*}
    (y *_\sigma F_2)(x) = \sigma(y,y^{-1}x)F_2(y^{-1}x).
\end{equation*}

We can further endow $\bbC[G]^\sigma$ with an inner product defined by
\begin{equation}\label{eqn:inner product}
    \langle F_1, F_2 \rangle = \frac{1}{|G|} \sum_{x \in G}F_1(x) \overline{F_2(x)}, \text{ for } F_1,F_2 \in \bbC[G]^\sigma.
\end{equation}
With this $\bbC[G]^\sigma$ may also be regarded as a Hilbert space. Note that this Hilbert space is usually denoted by $\ell^2(G)$. We will not use this notation, as the main thing we care about is the fact that left multiplication in $\bbC[G]^\sigma$ by the basis vectors $\{\delta_y\}_{y \in G}$ induces a $\sigma$-projective representation of $G$ (namely the left-regular $\sigma$-projective representation of $G$). The above inner product ensures that this projective representation is unitary. We will however not explicitly need that this projective representation is unitary, hence we choose to emphasize the algebraic structure involved by only using the notation $\bbC[G]^\sigma$.

If we have two 2-cocycles $\sigma_1,\sigma_2 \in Z^2(G,\bbT)$ and a function $f \colon G \to \bbT$, then the linear map $L_f \colon \bbC[G]^{\sigma_1} \to \bbC[G]^{\sigma_2}$ defined by
\begin{equation*}
    (L_f(F))(x) = f(x) \, F(x),
\end{equation*}
is an algebra isomorphism if and only if $\sigma_1 = (\delta f) \, \sigma_2$. In particular this means that the $\sigma$-projective representation theory of $G$ is only dependent on the class $[\sigma] \in H^2(G, \bbT)$, which is a fact that is very useful when one is doing concrete computations as one can often find 2-cocycles with very nice properties that make computations easier.

\subsection{Projective representation theory: irreducible representations}

If $\pi$ is a projective representation of a group $G$ on a Hilbert space $V$, we say that a subspace $W$ of $V$ is \emph{$\pi$-invariant} if $\pi(x)W \subset W$ for all $x \in G$. Then $\pi$ is said to be \emph{irreducible} if the only $\pi$-invariant subspaces of $V$ are $V$ and $\{0\}$. We will often make use of the fact that $\pi$ is irreducible if and only if
\begin{equation*}
    \Span\{\pi(x) \colon x \in G \} = \calB(V).
\end{equation*}

Projective representations enjoy analogs of Maschke's Theorem and Schur's Lemma, cf. \cite[Corollary 7.15]{CeccheriniSilbersteinTullio22}. Maschke's Theorem says that any projective representation $\pi$ of $G$ is isomorphic to a direct sum of irreducible projective representations. Schur's Lemma states that if $\pi_1$ and $\pi_2$ are irreducible projective representations of $G$, then
\begin{equation*}
    \dim \Hom_G(\pi_1,\pi_2) =
    \begin{cases}
        1 & \text{if } \pi_1 \simeq \pi_2, \\
        0 & \text{if } \pi_1 \not\simeq \pi_2.
    \end{cases}
\end{equation*}

Projective representations also enjoy a character theory which can be used to make powerful arguments. We will need the following result, see \cite[Proposition 2.2]{Cheng15}. If $\pi_1$ and $\pi_2$ are $\sigma$-projective representations of $G$ (not necessarily irreducible), then
\begin{equation}\label{eqn:inner porduct of characters}
    \langle \chi_{\pi_1}, \chi_{\pi_2} \rangle = \dim \Hom_G(\pi_1,\pi_2).
\end{equation}
Here the inner product is the one defined in \cref{eqn:inner product}, and for $i = 1,2$, $\chi_{\pi_i} \in \bbC[G]^\sigma$ are the functions defined by
\begin{equation*}
    \chi_{\pi_i}(x) = \Tr(\pi_i(x)).
\end{equation*}

\subsection{Projective representation theory: induction and Clifford theory}

Suppose that $H$ is a subgroup of $G$ and $\pi$ is a $\sigma$-projective representation of $G$ on $V$. We define a projective representation $\Res^G_H \pi$ of $H$ on $V$ by letting
\begin{equation*}
    \Res^G_H\pi(x) = \pi(x), \text{ for } x \in H.
\end{equation*}
We have that $\Res^G_H \pi$ is $\Res^G_H\sigma$-projective, where $\Res^G_H\sigma \in Z^2(H,\bbT)$ is defined by restricting $\sigma$ to $H \times H$. 

Induction is a procedure that is adjoint to restriction. Concretely: if $\theta$ is a $\Res^G_H\sigma$-projective representation of $H$ on $W$, we define a $\sigma$-projective representation $\Ind_H^G\theta$ of $G$ on $\Ind_H^G(W)$, where $\Ind_H^G(W)$ is the balanced tensor product:
\begin{equation*}
    \Ind_H^G(W) := \bbC[G]^\sigma \otimes_H W.
\end{equation*}
The subscript $H$ indicates that for any $F \in \bbC[G]^\sigma$, $\xi \in W$, and $x \in H$ we have that
\begin{equation*}
    F \otimes (\theta(x) \xi) = (F *_\sigma x) \otimes \xi.
\end{equation*}
If $\{e_i\}_{i = 1}^n$ is a basis for $W$, this property ensures that $\{ r \otimes e_i : r \in G/H, \ i \in \{1,\dots,n\} \}$ is a basis for $\Ind_H^G(W)$.
Then for any $x \in G$ we define $\Ind_H^G\theta(x)$ by linearly extending the following identity:
\begin{equation*}
    (\Ind_H^G\theta(x))(F \otimes \xi) = (x *_\sigma F) \otimes \xi, \text{ for } F \in \bbC[G]^\sigma, \xi \in W.
\end{equation*}
Since the $\sigma$-twisted convolution product in $\bbC[G]^\sigma$ is associative, $\Ind^G_H\theta \colon G \to U(\Ind^G_H(W))$ indeed defines a $\sigma$-projective representation of $G$.

Induced representations satisfy Frobenius reciprocity: there is a natural isomorphism
\begin{equation*}
    \Hom_G(\Ind_H^G\theta ,\pi) \simeq \Hom_H(\theta, \Res^G_H\pi)
\end{equation*}
for any $\sigma$-projective representation $\pi$ of $G$ and any $\Res^G_H\sigma$-projective representation $\theta$ of $H$, cf. \cite[Theorem 8.15]{CeccheriniSilbersteinTullio22}.

We will need to introduce some more notation before we can state the results we need from Clifford theory. Assume now that $H$ is a normal subgroup of $G$, that $\pi$ is an irreducible $\sigma$-projective representation of $G$, and that $\theta$ is an irreducible $\Res^G_H\sigma$-projective representation of $H$. Define the \emph{inertia group of $\theta$} to be
\begin{equation*}
    I_{G}(\theta) := \{ x \in G \colon \theta^x \simeq \theta \},
\end{equation*}
where $\theta^x$ is the $\Res^G_H\sigma$-projective representation of $H$ given by
\begin{equation*}
    \theta^x(y) = \sigma(x^{-1},y) \overline{\sigma(x^{-1}yx,x^{-1})} \theta(x^{-1}yx), \text{ for } y \in H.
\end{equation*}
Then the following version of Mackey's Lemma holds: If $\ell := \dim \Hom_H(\theta,\Res^G_H\pi)$ is nonzero, then
\begin{equation*}
    \Res^G_H\pi \simeq \ell \bigoplus_{r \in G/I_{G}(\theta)} \theta^r,
\end{equation*}
cf. \cite[Theorem 9.7]{CeccheriniSilbersteinTullio22}.

Finally, Clifford correspondence says the following: if $\ell = \dim\Hom_H(\theta,\Res^G_H\pi)$ is nonzero, then there exists a unique irreducible $\Res^G_{I_G(\theta)}\sigma$-projective representation $\rho$ of $I_G(\theta)$ such that $\Hom_H(\theta,\Res^{I_G(\theta)}_H\rho)$ is nonzero and
\begin{equation*}
    \Ind_{I_G(\theta)}^G\rho \simeq \pi,
\end{equation*}
Furthermore, $\dim\Hom_H(\theta,\Res^{I_G(\theta)}_H\rho) = \ell$, cf. \cite[Theorem 9.12]{CeccheriniSilbersteinTullio22}.

\section{Error models and projective error models}\label{sec:EMs and PEMs}

If $V$ is a nonzero Hilbert space we want to consider any unitary operator on $V$ as a potential error that can occur in the system that is modeled by $V$. It is however useful to instead work with some finite amount of data instead of the whole group of unitaries $U(V)$. In this section we make precise what type of finite data of $U(V)$ we want to work with. We will also always assume that the Hilbert spaces we consider are nonzero.

\begin{definition}\label{def:error model}
    Let $V$ be a Hilbert space. If $E$ is a group admitting a faithful irreducible representation on $V$ and $\lambda$ is a choice of such a representation we say that the pair $(E,\lambda)$ is an \emph{error model on $V$}. The class of all such pairs is denoted $\EM_V$.
\end{definition}

If $(E,\lambda) \in \EM_V$ then for any $x \in Z(E)$ we have that $\lambda(x) \in \bbT 1_V$ by Schur's Lemma. Since any quantum mechanical measurement cannot detect a global phase change, all of the elements in $Z(E)$ are unnecessary data. Hence, we may consider the quotient of $E$ by $Z(E)$ instead, or abstractly we may define a \emph{projective error model}:

\begin{definition}\label{def:projective error model}
    Let $V$ be a Hilbert space. If $G$ is a group admitting a projectively faithful irreducible projective representation on $V$ and $\pi$ is a choice of such a representation we say that the pair $(G,\pi)$ is a \emph{projective error model on $V$}. The class of all such pairs is denoted $\PEM_V$.
\end{definition}

The first question we would like to ask is what type of groups admit faithful irreducible representations, and what groups admit projectively faithful irreducible projective representations, and what are the connections between the two? For finite groups, the first question is essentially answered in \cite{Gaschtz54}. For the whole story, in a bit more general setting, one can see \cite{BekkadelaHarpe08,BekkadelaHarpe13}. Although it is useful to know that one can essentially classify these groups, we would like to briefly outline how the first class of groups is connected to the second class of groups.

\begin{proposition}\label{prop:correspondence between EM and PEM}
    Let $V$ be a Hilbert space. If $(E,\lambda) \in \EM_V$ then the quotient $G := E/Z(E)$ admits a projectively faithful irreducible projective representation $\pi$ on $V$. Furthermore, the group homomorphism $q \circ \pi$ is uniquely determined by the representation $\lambda$.

    Conversely, if $(G',\pi') \in \PEM_V$, then for any natural number $n$ such that the order of $[\sigma_{\pi'}] \in H^2(G_2,\bbT)$ divides $n$, we have that there exists a central extension $E'$ of $G'$ by $C_n$ that admits a faithful irreducible representation on $V$.
\end{proposition}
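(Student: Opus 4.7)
The plan for the first statement is to use Schur's lemma to factor $\lambda_1$ through the quotient modulo scalars. Since $\lambda_1$ is irreducible, every element of $\lambda_1(Z(G_1))$ commutes with all of $\lambda_1(G_1)$ and therefore lies in $\bbT \cdot \id_V$ by Schur's lemma; conversely, if $\lambda_1(g) \in \bbT \cdot \id_V$, then $\lambda_1(g)$ commutes with every element of $\lambda_1(G_1)$, and by faithfulness $g$ commutes with every element of $G_1$, so $g \in Z(G_1)$. Hence $q \circ \lambda_1 \colon G_1 \to PU(V)$ has kernel exactly $Z(G_1)$ and descends to an injective homomorphism $Q \hookrightarrow PU(V)$, which handles the uniqueness clause since the canonical projection $G_1 \to Q$ is surjective. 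To produce $\pi_1$ itself I pick any set-theoretic section $s \colon Q \to G_1$ and define $\pi_1 := \lambda_1 \circ s$; projective faithfulness is then the injectivity just established, and irreducibility follows from $\Span \pi_1(Q) = \Span \lambda_1(G_1) = \calB(V)$.

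\textbf{Reverse direction, cocycle adjustment.} For the converse the main subtlety is to first replace $\sigma := \sigma_{\pi_2}$ by a cohomologous 2-cocycle taking values in $C_n$. Since $[\sigma]^n$ is trivial in $H^2(G_2, \bbT)$, we can write $\sigma^n = \delta f$ for some $f \colon G_2 \to \bbT$; using divisibility of $\bbT$, pick a pointwise $n$-th root $g \colon G_2 \to \bbT$ with $g^n = f$ and set $\tau := \sigma \cdot (\delta g)^{-1}$. A direct computation gives $\tau^n = \sigma^n \cdot (\delta(g^n))^{-1} = \delta f \cdot (\delta f)^{-1} = 1$, so $\tau$ has values in $C_n$. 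The scalar-modified map $\pi_2'(x) := g(x)^{-1} \pi_2(x)$ is a $\tau$-projective, projectively faithful, irreducible representation on the same space $V$, so after this harmless replacement I may assume $\sigma$ itself takes values in $C_n$.

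\textbf{Constructing the extension.} With $\sigma$ adjusted, I form the standard central extension
\[
    1 \to C_n \to E \to G_2 \to 1, \qquad E := C_n \times G_2, \quad (z,x)(w,y) := (zw\,\sigma(x,y),\, xy),
\]
whose associativity reduces to the 2-cocycle identity for $\sigma$. Define $\tilde\lambda \colon E \to U(V)$ by $\tilde\lambda(z,x) := z \, \pi_2(x)$; the $\sigma$-projective identity for $\pi_2$ makes $\tilde\lambda$ a genuine group homomorphism, and its image has the same linear span as $\pi_2(G_2)$, which is all of $\calB(V)$ by irreducibility. For faithfulness, if $\tilde\lambda(z,x) = \id$ then $\pi_2(x)$ is a scalar, forcing $x = 1_{G_2}$ by projective faithfulness of $\pi_2$, after which the equation $z \cdot \pi_2(1) = \id$ pins $z$ down to the unique $C_n$-component of the identity element of $E$, so $(z,x) = 1_E$. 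The only non-cosmetic step in the whole argument is thus the cohomological adjustment pushing $\sigma$ into $C_n$; once that is in hand the extension $E$ and its faithful irreducible representation are essentially tautological.
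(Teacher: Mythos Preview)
Your proof is correct. The forward direction is essentially identical to the paper's: both pick a section $s\colon Q\to G_1$, set $\pi_1=\lambda_1\circ s$, and use Schur's lemma together with faithfulness of $\lambda_1$ to identify $\ker(q\circ\lambda_1)=Z(G_1)$; your version just packages the faithfulness argument a bit more efficiently by establishing both inclusions of the kernel description up front.

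The reverse direction differs in presentation. The paper obtains a $C_n$-valued cocycle cohomologous to $\sigma_{\pi_2}$ by invoking the long exact sequence in group cohomology associated to the short exact sequence $1\to C_n\to\bbT\xrightarrow{(-)^n}\bbT\to 1$: exactness at the middle term says precisely that any class killed by multiplication by $n$ lifts to $H^2(G_2,C_n)$. You instead carry out this lifting by hand, writing $\sigma^n=\delta f$, choosing a pointwise $n$-th root $g$ of $f$ via the divisibility of $\bbT$, and checking directly that $\tau=\sigma\cdot(\delta g)^{-1}$ lands in $C_n$. The two arguments are really the same fact in different clothing---your explicit construction is exactly what the long exact sequence encodes---but your route avoids the cohomological machinery and makes the dependence on divisibility of $\bbT$ visible, while the paper's phrasing is more conceptual and would generalize immediately to other coefficient changes. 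After the cocycle adjustment, both proofs finish identically: form the twisted product $E=C_n\times_\tau G_2$ and set $\lambda(z,x)=z\,\pi_2'(x)$. Your handling of faithfulness (noting that $\pi_2(1)$ need not be $\id_V$ but that $z\pi_2(1)=\id$ still forces $(z,1)=1_E$) is careful and correct.
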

\begin{proof}
    For the first statement, let $(E,\lambda) \in \EM_V$. We will define a projective representation of $G$ on $V$. For each $x \in G$ let $s_x \in E$ be any coset representative of $x$. For $x,y \in G$ we have that there exists a unique element $\alpha(x,y) \in Z(E)$ such that $s_x s_y = \alpha(x,y) s_{xy}$. Since $\lambda$ is an irreducible representation of $E$ we have by Schur's Lemma that
    \begin{equation*}
        \lambda(\alpha(x,y)) = \sigma(x,y)1_{U(V)}
    \end{equation*}
    for some $\sigma(x,y) \in \bbT$. Letting $x$ and $y$ range over all elements in $G$ we obtain a 2-cocycle $\sigma \in Z^2(G,\bbT)$. We now define a function $\pi \colon G \to U(V)$ by the equation
    \begin{equation*}
        \pi(x) := \lambda(s_x).
    \end{equation*}
    Then $\pi$ is a $\sigma$-projective representation of $G$. Note that by construction we have that the diagram
    \begin{equation*}
        \xymatrix{
            E \ar[r]^-{p} \ar[d]_-{\lambda_1} & G \ar[d]^-{q \circ \pi_1} \\
            U(V) \ar[r]^-{q} & PU(V)
        }
    \end{equation*}
    commutes. By surjectivity of the quotient map $p \colon E \to G$ we get that $q \circ \pi$ is the unique group homomorphism making this diagram commute.
    
    It remains to check that $\pi$ is projectively faithful: Let $x \in \ker{(q \circ \pi)}$. By assumption we have that 
    \begin{equation*}
        \lambda(s_x) = \pi(x) = z 1_{U(V)}
    \end{equation*}
    for some $z \in \bbT$. Hence, for any $g \in E$ we have that
    \begin{equation*}
        \lambda(s_x g) = \lambda(g s_x).
    \end{equation*}
    Since $\lambda$ is injective ($\lambda$ is a faithful representation of $E$ on $V$) we have that $s_x g = g s_x$ for every $g \in E$. Thus, $s_x \in Z(E)$, meaning that $x = 1_{G}$. Hence, $q \circ \pi$ is injective.

    For the second statement, let now $(G',\pi') \in \PEM_V$, and suppose that $\pi'$ is a $\sigma$-projective representation of $G'$ on $V$. Let $n$ be a natural number such that the order of $[\sigma] \in H^2(G',\bbT)$ divides $n$. The following short exact sequence,
    \begin{equation*}
        \xymatrix{
            1 \ar[r] & C_n \ar[r]^-{i} & \bbT \ar[r]^-{(-)^n} & \bbT \ar[r] & 1,
        }
    \end{equation*}
    induces a long exact sequence in group cohomology with the following segment,
    \begin{equation*}
        \xymatrix{
            \cdots \ar[r] & H^2(G',C_n) \ar[r]^-{i^*} & H^2(G',\bbT) \ar[r]^-{n} & H^2(G',\bbT) \ar[r] & \cdots,
        }
    \end{equation*}
    where $n$ here denotes the map that is multiplication by $n$. The fact that the order of $[\sigma]$ divides $n$ is precisely the fact that $[\sigma] \in \ker{n}$, and since this sequence is exact, this means that there exists an element $[\sigma'] \in H^2(G',C_n)$ such that $i^*([\sigma']) = [\sigma]$. Fix a representative $\sigma' \colon G' \times G' \to C_n$ of $[\sigma']$ and choose a function $f \colon G' \to \bbT$ such that $\sigma' = (\delta f) \, \sigma$. Then we have that
    \begin{equation*}
        f(x)\pi'(x) \, f(y)\pi'(y) = \sigma'(x,y) f(xy)\pi'(xy) \text{ for all } x,y \in G'.
    \end{equation*}
    Let $E'$ denote the central extension of $G'$ by $C_n$ corresponding to the element $[\sigma'] \in H^2(G',C_n)$. Then define a representation $\lambda' \colon E' \to U(V)$ by
    \begin{equation*}
        \lambda'(z,x) = z f(x)\pi'(x)
    \end{equation*}
    for $z \in C_n$ and $x \in G'$. This is easily seen to define a faithful irreducible representation of $E'$ on $V$, completing the proof.
\end{proof}

Given an error model $(E,\lambda) \in \EM_V$ and a code space $W \subset V$, one of the most important quantities to study is the collection of detectable errors in $E$ over $W$. This will be the elements of $E$ that satisfy the Knill-Laflamme conditions, \cref{eqn:Knill-Laflamme conditions 2}. To be precise we make the following definitions.

\begin{definition}\label{def:detectable errors}
    Let $V$ be a Hilbert space and $W \subset V$ be a subspace. If $(E,\lambda) \in \EM_V$ we define the set of detectable errors in $E$ over $W$ to be the set
    \begin{equation*}
        D_{(E,\lambda)}(W) := \{ x \in E \colon P_W \lambda(x) P_W \in \bbC P_W \},
    \end{equation*}
    where $P_W$ denotes the orthogonal projection onto $W$. If $(G,\pi) \in \PEM_V$ we analogously define the set
    \begin{equation*}
        D_{(G,\pi)}(W) = \{ x \in G \colon P_W \pi(x) P_W \in \bbC P_W \}.
    \end{equation*}
\end{definition}

With this we can have a precise formulation of the questions we want to answer. Given an error model $(E,\lambda) \in \EM_V$ we want to answer: 
\begin{enumerate}
    \item If $W \subset V$ is a subspace, can we gain an alternative description of $D_{(E,\lambda)}(W)$?
    \item If $D \subset G$ is some subset, what is the largest (in terms of dimension) subspace $W$ of $V$ such that $D \subset D_{(E,\lambda)}(W)$, and can we describe this subspace $W$?
\end{enumerate}
Of course we are also interested in the analogous questions for projective error models, and in some sense, we are only interested in these questions for projective error models:

\begin{proposition}\label{prop:detectable errors of EM vs PEM}
    Let $V$ be a Hilbert space and $W \subset V$ be a subspace. If $(E,\lambda) \in \EM_V$ then
    \begin{equation*}
        D_{(E,\lambda)}(W) = \{ x \in E \colon p(x) \in D_{(E/Z(E),\pi)}(W) \},
    \end{equation*}
    where $p \colon E \to E/Z(E)$ denotes the quotient map, and $\pi \colon E/Z(E) \to U(V)$ is any projectively faithful irreducible projective representation such that $q \circ \pi \circ p = q \circ \lambda$.
\end{proposition}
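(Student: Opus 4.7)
The plan is to reduce the statement to a scalar comparison at the level of individual operators. The central observation will be that the hypothesis $q \circ \pi \circ p = q \circ \lambda$ forces $\lambda(x)$ and $\pi(p(x))$ to coincide modulo the center of $U(V)$, so they differ only by a unit scalar; detectability, which only cares about lying in $\bbC P_W$, is completely insensitive to such a rescaling. Existence of a $\pi$ with the required compatibility is guaranteed by \cref{prop:correspondence between EM and PEM}, so the statement is nonvacuous.

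First I would unpack the hypothesis. Since $q \colon U(V) \to PU(V)$ is the quotient by scalar multiples of the identity, the equation $q(\lambda(x)) = q(\pi(p(x)))$ holding for every $x \in G$ produces a unique function $c \colon G \to \bbT$ with
\[
    \lambda(x) = c(x)\, \pi(p(x)) \qquad \text{for all } x \in G.
\]
Next I would sandwich this identity between copies of $P_W$ to obtain
\[
    P_W \lambda(x) P_W = c(x)\, P_W \pi(p(x)) P_W.
\]
Because $c(x) \in \bbT$ is a nonzero complex scalar, the left-hand side lies in $\bbC P_W$ if and only if the right-hand side does. Translating this equivalence through \cref{def:detectable errors} immediately yields $x \in D_{(G,\lambda)}(W)$ if and only if $p(x) \in D_{(G/Z(G),\pi)}(W)$, which is the claim.

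There is no substantive obstacle here: the result is essentially a bookkeeping observation that the Knill--Laflamme-type condition $P_W(\,\cdot\,)P_W \in \bbC P_W$ is invariant under rescaling by a unit complex number, and the hypothesis on $\pi$ is exactly the one needed to exhibit $\lambda(x)$ as such a rescaling of $\pi(p(x))$. The only point worth a moment's care is verifying that a compatible $\pi$ exists in the first place; here I would simply invoke the first half of \cref{prop:correspondence between EM and PEM}, where $\pi$ is built from coset representatives of $G/Z(G)$ in $G$ and is shown to satisfy $q \circ \pi \circ p = q \circ \lambda$ by construction.
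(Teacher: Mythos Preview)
Your proof is correct and follows essentially the same approach as the paper: both arguments extract from $q \circ \pi \circ p = q \circ \lambda$ a scalar function relating $\lambda(x)$ and $\pi(p(x))$ (the paper writes $\pi(p(x)) = f(x)\lambda(x)$, you write the equivalent $\lambda(x) = c(x)\pi(p(x))$), observe that detectability is invariant under such rescaling, and invoke \cref{prop:correspondence between EM and PEM} for existence. Your version is slightly more explicit about the $P_W$-sandwiching step, but there is no substantive difference.
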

\begin{proof}
    Let $(E,\lambda) \in \EM_V$ and $\pi \colon E/Z(E) \to U(V)$ be any projectively faithful irreducible projective representation such that $q \circ \pi \circ p = q \circ \lambda$. Note that the existence of such a projective representation $\pi$ is guaranteed by \cref{prop:correspondence between EM and PEM}. Then for any $x \in E$ we have that there exists an element $f(x) \in \bbT$ such that
    \begin{equation*}
        \pi(p(x)) = f(x)\lambda(x).
    \end{equation*}
    From this it is clear that $x \in D_{(E,\lambda)}(W)$ if and only if $p(x) \in D_{(E/Z(E),\pi)}(W)$, completing the proof.
\end{proof}

Note that this also shows that for any projective error model $(G,\pi) \in \PEM_V$, and any subspace $W \subset V$, the collection of detectable errors $D_{(G,\pi)}(W)$ is only dependent on the group homomorphism $q \circ \pi \colon G \to PU(V)$, and not on the projective representation $\pi$ of $G$ on $V$. One could therefore consider defining a projective error model on $V$ to be a pair $(G, \pi')$ where $G$ is a finite group and $\pi' \colon G \to PU(V)$ is a group homomorphism satisfying some niceness properties analogous to irreducibility and faithfulness. The problem with this is that one does not have a good notion of a representation theory for such group homomorphisms. The reason for this is that there might be inequivalent projective representations $\pi_1, \pi_2 \colon G \to U(V)$ such that
\begin{equation*}
    q \circ \pi_1 = q \circ \pi_2 = \pi'.
\end{equation*}
This happens for example for the group $D_4$, the dihedral group of $8$ elements, which has precisely two irreducible projective representations of degree 2 (with a nontrivial 2-cocycle). Both of these representations give the same group homomorphism when precomposed with the quotient map as above. A way to see that this is the case is by looking at the character table for $D_4 = \langle a, b \ | \ a^4 = b^2 = 1, \ bab = a^{-1} \rangle$:
\begin{equation*}
    \begin{array}{|c||r||r|r||r||r|r||r|r|}
        \hline
        \rm Elements & 1 & a & a^3 & a^2 & b & a^2b & ab & a^3b \cr
        \hline \hline
        \rho_{1} & 1 & 1 & 1 & 1 & 1 & 1 & 1 & 1 \cr\hline
        \rho_{2} & 1 & 1 & 1 & 1 & -1 & -1 & -1 & -1 \cr\hline
        \rho_{3} & 1 & -1 & -1 & 1 & 1 & 1 & -1 & -1 \cr\hline
        \rho_{4} & 1 & -1 & -1 & 1 & -1 & -1 & 1 & 1 \cr\hline
        \rho_{5} & 2 & 0 & 0 & -2 & 0 & 0 & 0 & 0 \cr\hline\hline
        \chi_{1} & 2 & 1 + i & 1 - i & 0 & 0 & 0 & 0 & 0 \cr\hline
        \chi_{2} & 2 & -1 - i & -1 + i & 0 & 0 & 0 & 0 & 0 \cr\hline
    \end{array}
\end{equation*}
Here we have grouped the elements of $D_4$ according to their conjugacy classes, the $\rho_i$'s are all the linear characters of $D_4$, and $\chi_1$ and $\chi_2$ are the two projective characters of $D_4$, where $\chi_1$ is the character associated to the projective representation described in \cref{prop:projective XP-error model}. Let $\pi_1 \colon D_4 \to U(\bbC^2)$ be a projective representation whose character is $\chi_1$. Since the degree of $\rho_3$ above is $1$, we have that $\rho_3$ is in fact a group homomorphism, hence the projective representation $\pi_2 \colon D_4 \to U(\bbC^2)$ defined by
\begin{equation*}
    \pi_2(x) = \rho_3(x)\pi_1(x)
\end{equation*}
has the same 2-cocycle associated with it as $\pi_1$. Clearly we have that $q \circ \pi_1 = q \circ \pi_2$ since $\rho_3(x) \in \bbT$ for every $x \in D_4$. But the character of $\pi_2$ is necessarily given by $\rho_3 \, \chi_1 = \chi_2$, hence $\pi_1$ and $\pi_2$ are not isomorphic.

Next we present the most well-known example of an error model.

\begin{example}\label{ex:Pauli group}
    Consider the Pauli group $P_n$ on $n$ qubits given as a subgroup of $U((\bbC^2)^{\otimes n})$. It is an error model, in the sense of \cref{def:error model}, by considering the pair $(P_n,i)$, where $i \colon P_n \to U((\bbC^2)^{\otimes n})$ is the inclusion homomorphism. Physically, the Pauli group is an error model that models errors occurring independently on different qubits, this is typical of product error models which we will get back to.
    
    A projective error model associated to the Pauli error model is given by $(G,\pi)$ where $G = (\bbZ_2 \times \bbZ_2)^{n}$ and $\pi$ is the $n$-fold product representation of the only projectively faithful irreducible projective representation of $\bbZ_2 \times \bbZ_2$. Explicitly, if
    \begin{equation*}
        X =
        \begin{bmatrix}
            0 & 1 \\
            1 & 0
        \end{bmatrix}, \ 
        Z =
        \begin{bmatrix}
            1 & 0 \\
            0 & -1 
        \end{bmatrix},
    \end{equation*}
    we define $\pi \colon (\bbZ_2 \times \bbZ_2)^{n} \to U((\bbC^2)^{\otimes n})$ by
    \begin{equation*}
        \pi(a_1,b_1,\dots,a_n,b_n) = X^{a_1} Z^{b_1} \otimes \cdots \otimes X^{a_n}Z^{b_n}.
    \end{equation*}
\end{example}

The Pauli group is a very important example of an error model. The first versions of more general error models are due to Knill, cf. \cite{KnillI96,KnillII96} where the term \emph{nice error basis} is introduced. In our setup, a nice error basis is a projective error model $(G,\pi) \in \PEM_V$ such that $|G| = (\dim V)^2$. In the literature, groups that admit an irreducible projective representation with this property are sometimes called \emph{groups of central type}. The reader should be warned that this term is sometimes also used to refer to a group $E$ such that the quotient $E/Z(E)$ admits an irreducible projective representation on a Hilbert space $V$ with the property that $|E/Z(E)| = (\dim V)^2$.

In \cite{ChienWaldron17} the notion of a \emph{nice error frame} is introduced. This is precisely the data of a projective error model as above. For a more thorough account of nice error frames, one can consult the book \cite{Waldron18}.

\section{Examples of projective error models} \label{sec:EM examples}

In this section we will give some important examples of some of the concepts we have talked about so far. First we would like to point out that for any Hilbert space $V$, the collection of projective error models $\PEM_V$ is non-empty. This is of course well known, but it is nice to have it highlighted.

\begin{proposition}\label{prop:projective error models exist}
    Let $n \geq 1$ be an integer, $\zeta_n \in \bbC$ be a primitive $n$-th root of unity, and consider the group $\bbZ_n \times \bbZ_n$. Then the function $\pi \colon \bbZ_n \times \bbZ_n \to U(\bbC^n)$ defined by
    \begin{equation*}
        \pi(a,b) = X_n^a Z_n^b,
    \end{equation*}
    where
    \begin{equation*}
        X_n =
        \begin{bmatrix}
            0 & 1 & 0 & \cdots & 0 & 0 \\
            0 & 0 & 1 & \cdots & 0 & 0 \\
            0 & 0 & 0 & \cdots & 0 & 0 \\
            \vdots & \vdots & \vdots & \ddots & \vdots & \vdots \\
            0 & 0 & 0 & \cdots & 0 & 1 \\
            1 & 0 & 0 & \cdots & 0 & 0
        \end{bmatrix},
        \text{ and }
        Z_n =
        \begin{bmatrix}
            1 & 0 & 0 & \cdots & 0 & 0 \\
            0 & \zeta_n & 0 & \cdots & 0 & 0 \\
            0 & 0 & \zeta_n^2 & \cdots & 0 & 0 \\
            \vdots & \vdots & \vdots & \ddots & \vdots & \vdots \\
            0 & 0 & 0 & \cdots & \zeta_n^{n-2} & 0 \\
            0 & 0 & 0 & \cdots & 0 & \zeta_n^{n-1}
        \end{bmatrix},
    \end{equation*}
    is a projectively faithful irreducible projective representation of $\bbZ_n \times \bbZ_n$ on $\bbC^n$.
\end{proposition}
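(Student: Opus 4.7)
The plan is to verify the three required properties in sequence. First, I would compute the commutation relation $X_n Z_n = \zeta_n Z_n X_n$ directly from the matrix formulas (both sides act on the standard basis vector $e_j$ with an easy pattern). Iterating gives $Z_n^b X_n^a = \zeta_n^{-ab} X_n^a Z_n^b$, and hence
\begin{equation*}
    \pi(a_1,b_1)\pi(a_2,b_2) = X_n^{a_1}Z_n^{b_1}X_n^{a_2}Z_n^{b_2} = \zeta_n^{-a_2 b_1}\pi(a_1+a_2,\,b_1+b_2),
\end{equation*}
where the indices are taken mod $n$ since $X_n^n = Z_n^n = \id$. This exhibits $\pi$ as a $\sigma$-projective representation with $\sigma\bigl((a_1,b_1),(a_2,b_2)\bigr) = \zeta_n^{-a_2 b_1}$, and also confirms that $\pi$ is unitary (both $X_n$ and $Z_n$ are).

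Next I would establish irreducibility. Since $\dim \calB(\bbC^n) = n^2 = |\bbZ_n \times \bbZ_n|$, it suffices to show that the $n^2$ operators $\{\pi(a,b) : (a,b) \in \bbZ_n\times\bbZ_n\}$ are linearly independent in $\calB(\bbC^n)$, which will force $\Span\{\pi(a,b)\} = \calB(\bbC^n)$ and thus irreducibility. For this I would use the Hilbert-Schmidt inner product and compute $\Tr(\pi(a,b)^*\pi(c,d)) = \Tr(Z_n^{-b}X_n^{c-a}Z_n^d)$. The key observation is that $X_n^{c-a}$ is a permutation matrix whose diagonal entries are all zero whenever $c \neq a$ in $\bbZ_n$, so the trace vanishes unless $a = c$; and when $a = c$, the trace reduces to $\Tr(Z_n^{d-b}) = \sum_{j=0}^{n-1}\zeta_n^{j(d-b)}$, which equals $n$ if $b = d$ and $0$ otherwise. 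This orthogonality gives the required linear independence.

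Finally I would show projective faithfulness, i.e.\ that $\pi(a,b) \in \bbT\cdot \id$ forces $(a,b) = (0,0)$. Computing $\Tr(\pi(a,b))$ via the identity from the previous paragraph (with $c = d = 0$) gives $0$ whenever $a \neq 0$, ruling out $\pi(a,b)$ being a nonzero scalar multiple of the identity unless $a = 0$. In the case $a = 0$, the operator is the diagonal matrix $Z_n^b = \operatorname{diag}(1,\zeta_n^b,\ldots,\zeta_n^{(n-1)b})$, which is a scalar multiple of $\id$ precisely when $\zeta_n^b = 1$, i.e.\ $b = 0$ in $\bbZ_n$ (since $\zeta_n$ is primitive).

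The argument is really a sequence of direct calculations, so no step is genuinely hard; the only place one must take care is in fixing a consistent convention for the commutation relation $X_nZ_n = \zeta_n Z_n X_n$ and propagating the sign through the definition of $\sigma$, since a sign error there would not affect the conclusion but would give a misleading formula for the 2-cocycle that is later used implicitly in, e.g., \cref{prop:correspondence between EM and PEM}.
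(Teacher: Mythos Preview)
Your proposal is correct and follows essentially the same line as the paper's proof: verify the relation $X_n Z_n = \zeta_n Z_n X_n$ (together with $X_n^n = Z_n^n = I$) to see that $q\circ\pi$ is a homomorphism, then show that $\{\pi(a,b)\}$ is a linearly independent set of cardinality $n^2$ and conclude irreducibility from $\Span\{\pi(a,b)\} = M_n(\bbC)$. The paper is terser on two points: it simply asserts linear independence rather than proving Hilbert--Schmidt orthogonality, and it reads off projective faithfulness directly from linear independence (if $\pi(a,b) = \lambda\,\pi(0,0)$ then $(a,b)=(0,0)$) rather than via a separate trace/diagonal argument as you do. Both routes are fine; your version is just more explicit, and your additional computation of the 2-cocycle $\sigma$ is a harmless bonus.
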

\begin{proof}
    One computes that $(X_n)^n = (Z_n)^n = I_n$ and that $X_nZ_n = \zeta_n Z_nX_n$. Hence, the defining relations for $\bbZ_n \times \bbZ_n$ are satisfied by the elements $q(X_n),q(Z_n) \in PU(\bbC^n)$. Thus, the composition $q \circ \pi \colon \bbZ_n \times \bbZ_n \to PU(\bbC^n)$ is a group homomorphism, meaning that $\pi$ is a projective representation of $\bbZ_n \times \bbZ_n$ on $\bbC^n$.

    Notice that the set $\{\pi(a,b) \colon a,b \in \bbZ_n \}$ is linearly independent, so since the cardinality of $\bbZ_n \times \bbZ_n$ is $n^2$ we get that this is in fact a basis for $M_n(\bbC)$. This shows that $\pi$ is projectively faithful and irreducible, completing the proof.
\end{proof}

The next class of projective error models is derived from the XP-stabilizer formalism, cf. \cite{Webster2022}. We would also like to mention that the following facts about the dihedral groups are well known, and an exposition can be found in \cite[Chapter 3.7]{Karpilovsky85}. For the ease of the reader we will present a proof here.

\begin{proposition}\label{prop:projective XP-error model}
    Let $n \geq 2$ be an integer, $\zeta_n \in \bbC$ be a primitive $n$-th root of unity, and consider the dihedral group of $2n$ elements with the following presentation:
    \begin{equation*}
        D_n = \langle a,b \, \vert \, a^n = b^2 = 1, \, bab = a^{n-1} \rangle = \{ 1, a, a^2, \dots, a^{n-1}, b, ba, \dots, ba^{n-1} \}.
    \end{equation*}
    Then the function $\pi \colon D_n \to U(\bbC^2)$ defined by
    \begin{equation*}
        \pi(b^ka^l) = X^k P^l, \text{ for } k = 0,1 \text{ and } l = 0,1,\dots,n-1,
    \end{equation*}
    where 
    \begin{equation*}
        X =
        \begin{bmatrix}
            0 & 1 \\
            1 & 0
        \end{bmatrix},
        \text{ and }
        P =
        \begin{bmatrix}
            1 & 0 \\
            0 & \zeta_n
        \end{bmatrix},
    \end{equation*}
    is a projectively faithful irreducible projective representation of $D_n$ on $\bbC^2$.
\end{proposition}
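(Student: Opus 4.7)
The plan is to verify the three claims separately: that $\pi$ is a projective representation, that it is projectively faithful, and that it is irreducible.

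To establish that $\pi$ is a projective representation, I would use the presentation of $D_n$ and check that the defining relations hold in $PU(\bbC^2)$. Concretely, I need to verify that $X^2$, $P^n$ are scalar multiples of $I_2$ and that $XPX$ is a scalar multiple of $P^{n-1}$. The first two are immediate since $X^2 = I_2$ and $P^n = I_2$ (using $\zeta_n^n = 1$). For the third, a direct computation yields
\begin{equation*}
    XPX = \begin{bmatrix} 0 & 1 \\ 1 & 0 \end{bmatrix}\begin{bmatrix} 1 & 0 \\ 0 & \zeta_n \end{bmatrix}\begin{bmatrix} 0 & 1 \\ 1 & 0 \end{bmatrix} = \begin{bmatrix} \zeta_n & 0 \\ 0 & 1 \end{bmatrix} = \zeta_n P^{n-1}.
\end{equation*}
Hence $q \circ \pi$ factors through the abstract presentation of $D_n$, so it defines a group homomorphism $D_n \to PU(\bbC^2)$, i.e., $\pi$ is a projective representation.

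For projective faithfulness, I would split into the two types of group elements listed in the enumeration of $D_n$. For $g = a^l$ with $1 \leq l \leq n-1$, we have $\pi(g) = P^l = \mathrm{diag}(1,\zeta_n^l)$, which is a scalar multiple of $I_2$ only when $\zeta_n^l = 1$, i.e., when $l = 0$. For $g = ba^l$, we have $\pi(g) = XP^l$, which has zero diagonal entries and therefore cannot be a nonzero scalar multiple of $I_2$. Thus $\ker(q \circ \pi)$ is trivial.

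For irreducibility, by the standard criterion stated earlier in the paper it suffices to show that $\Span\{\pi(g) : g \in D_n\} = \calB(\bbC^2)$. I would argue that the four elements $\pi(1) = I_2$, $\pi(a) = P$, $\pi(b) = X$, and $\pi(ba) = XP$ already span $M_2(\bbC)$: the first two are diagonal matrices that are linearly independent (since $\zeta_n \neq 1$ as $n \geq 2$) and therefore span the diagonal matrices, while the last two are off-diagonal matrices that are linearly independent and thus span the off-diagonal matrices. No step here is a serious obstacle; the only place one needs to pay attention is the relation $XPX = \zeta_n P^{n-1}$, which must be carried out precisely with the chosen convention for $\zeta_n$ so that projectivity of the representation, rather than strict linearity, is what is concluded.
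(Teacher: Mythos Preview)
Your proof is correct and follows essentially the same approach as the paper's: both verify the relations $X^2 = P^n = I_2$ and $XPX = \zeta_n P^{n-1}$ to conclude $q\circ\pi$ is a homomorphism, both show irreducibility by observing that $I_2, P, X, XP$ span $M_2(\bbC)$, and both check faithfulness by inspecting $\ker(q\circ\pi)$ directly. You simply supply a bit more detail in each step than the paper does.
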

\begin{proof}
    We first compute that $P^n = X^2 = I$ and that $XPX = \zeta_n P^{n-1}$. This means that the defining relations for $D_n$ are satisfied by the elements $q(P), q(X) \in PU(\bbC^2)$. Thus, the composition $q \circ \pi \colon D_n \to PU(\bbC^2)$ is a group homomorphism, meaning that $\pi$ is a projective representation of $D_n$ on $\bbC^2$.

    Notice further that the elements $I, X, P, XP$ are linearly independent, hence they form a basis for $M_2(\bbC)$. In particular this means that $\Span\{\pi(x) \colon x \in D_n\} = M_2(\bbC)$, hence $\pi$ is irreducible.

    Finally, one easily computes that $\ker(q \circ \pi) = \{ 1 \}$, hence $\pi$ is projectively faithful, completing the proof.
\end{proof}

\begin{remark}
    One can compute that the 2-cocycle associated to the above projective representation is given by the function $\sigma \colon D_n \times D_n \to \bbT$, defined by
    \begin{equation*}
        \sigma(b^{k_1}a^{l_1}, b^{k_2}a^{l_2}) = \zeta_n^{k_2 l_1}.
    \end{equation*}
    Furthermore, the second cohomology group
    \begin{equation*}
        H^2(D_n, \bbT) = \begin{cases}
            0, & n \text{ is odd}, \\
            \bbZ_2, & n \text{ is even}.
        \end{cases}
    \end{equation*}
    Hence, when $n$ is odd, $\sigma$ is a coboundary. In fact, if we define $f \colon D_n \to \bbT$ by
    \begin{equation*}
        f(b^ka^l) = \zeta_n^{(n-1)l/2},
    \end{equation*}
    then $\sigma \, (\delta f) = 1$.
\end{remark}

Given two projective error models, there is a natural way of combining them to give us a new projective error model. In what follows we explain how to construct such product projective error models.

\begin{proposition}\label{prop:product error models}
    Let $V_1$ and $V_2$ be Hilbert spaces and suppose that $(G_1, \pi_1) \in \PEM_{V_1}$ and $(G_2, \pi_2) \in \PEM_{V_2}$. Then the projective representation $\pi_1 \otimes \pi_2 \colon G_1 \times G_2 \to U(V_1 \otimes V_2)$ defined by
    \begin{equation*}
        (\pi_1 \otimes \pi_2)(x,y) = \pi_1(x) \otimes \pi_2(y), \ x \in G_1, \text{ and } y \in G_2,
    \end{equation*}
    is a projectively faithful irreducible projective representation of $G_1 \times G_2$ on $V_1 \otimes V_2$.
\end{proposition}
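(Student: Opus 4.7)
The plan is to verify three properties of $\pi_1 \otimes \pi_2$ in sequence: that it is a projective representation of $G_1 \times G_2$, that it is irreducible, and that it is projectively faithful. The first two are nearly immediate from the corresponding properties of $\pi_1$ and $\pi_2$; the third requires a short additional argument.

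For the projective representation property, fix 2-cocycles $\sigma_1$ and $\sigma_2$ such that $\pi_i$ is $\sigma_i$-projective. Using the mixed product identity $(A \otimes B)(C \otimes D) = AC \otimes BD$, a direct calculation yields
\[
    (\pi_1 \otimes \pi_2)(x_1,y_1) \, (\pi_1 \otimes \pi_2)(x_2,y_2) = \sigma_1(x_1,x_2)\sigma_2(y_1,y_2) \, (\pi_1 \otimes \pi_2)(x_1x_2, y_1y_2),
\]
so $\pi_1 \otimes \pi_2$ is a $\sigma$-projective representation of $G_1 \times G_2$ with $\sigma((x_1,y_1),(x_2,y_2)) := \sigma_1(x_1,x_2)\sigma_2(y_1,y_2)$. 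For irreducibility, I would use the characterization that $\pi$ is irreducible precisely when $\Span\{\pi(g)\} = \calB(V)$. Since $\pi_1$ and $\pi_2$ are irreducible, the identifications $\calB(V_1) \otimes \calB(V_2) = \calB(V_1 \otimes V_2)$ give
\[
    \Span\{\pi_1(x) \otimes \pi_2(y) : x \in G_1, y \in G_2\} = \calB(V_1) \otimes \calB(V_2) = \calB(V_1 \otimes V_2),
\]
so $\pi_1 \otimes \pi_2$ is irreducible.

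For projective faithfulness, suppose $(x,y)$ lies in $\ker(q \circ (\pi_1 \otimes \pi_2))$, so that $\pi_1(x) \otimes \pi_2(y) = z \, (I_{V_1} \otimes I_{V_2})$ for some $z \in \bbT$. The crucial step is a short algebraic lemma: if $A \in \calB(V_1)$ and $B \in \calB(V_2)$ are nonzero with $A \otimes B = c \, (I_{V_1} \otimes I_{V_2})$, then $A \in \bbC \cdot I_{V_1}$ and $B \in \bbC \cdot I_{V_2}$. This I would prove by picking a nonzero eigenvector $v$ of $B$ with eigenvalue $\mu \neq 0$ (which exists since $B$ is unitary on a finite-dimensional space), then evaluating both sides of the identity on elementary tensors $u \otimes v$ to conclude $\mu \, Au = cu$ for every $u \in V_1$; a symmetric argument handles $B$. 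Applied to our situation, this forces $\pi_1(x) \in \bbT \cdot I_{V_1}$ and $\pi_2(y) \in \bbT \cdot I_{V_2}$, and projective faithfulness of $\pi_1$ and $\pi_2$ then gives $x = 1_{G_1}$ and $y = 1_{G_2}$, completing the proof.

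The only nontrivial ingredient is the tensor-splitting lemma used in the final step; the remaining verifications are routine consequences of the mixed product identity and the span characterization of irreducibility, so I do not expect any serious obstacle.
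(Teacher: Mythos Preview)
Your proof is correct and follows essentially the same approach as the paper: irreducibility via the span characterization $\Span\{\pi_1(x)\otimes\pi_2(y)\}=\calB(V_1\otimes V_2)$, and projective faithfulness via the observation that $\pi_1(x)\otimes\pi_2(y)$ being scalar forces each factor to be scalar. You supply more detail than the paper (the explicit cocycle computation and the tensor-splitting lemma), but the structure is the same.
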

\begin{proof}
    It is straightforward to check that the projective representation $\pi_1 \otimes \pi_2$ is irreducible. One can for example confirm that $\Span\{ (\pi_1 \otimes \pi_2)(x,y) \colon x \in G_1, \ y \in G_2 \} = B(V_1 \otimes V_2)$.

    To see that $\pi_1 \otimes \pi_2$ is faithful, suppose that $(\pi_1 \otimes \pi_2)(x,y)$ is a scalar multiple of the identity. Then both $\pi_1(x)$ and $\pi_2(y)$ are scalar multiples of the identities on $V_1$ and $V_2$ respectively. Since both $\pi_1$ and $\pi_2$ are assumed to be projectively faithful we can conclude that $x = 1_{G_1}$ and $y = 1_{G_2}$. This completes the proof.
\end{proof}

This fact is also well known; see for example \cite[Theorem 10]{Serre77} for a proof in the linear setting.

With this established we see that the projective Pauli error model, as explained in \cref{ex:Pauli group}, arise by considering either of the projective error models in \cref{prop:projective error models exist} or \cref{prop:projective XP-error model} with $n = 2$, and taking products of these models with themselves as explained in \cref{prop:product error models}.

When taking products as above we can also incorporate an action from the symmetric group on $n$ letters, $S_n$. This gives a more sophisticated way of constructing projective error models, as has been done in for example \cite{Pollatsek-Ruskai04} and \cite{AydinAlekseyevBarg24} where the authors construct permutation invariant codes that can correct errors from the Pauli group.

\begin{proposition}\label{prop:permutation product PEM}
    Let $V$ be a Hilbert space and suppose that $(G,\pi) \in \PEM_V$. For $n \geq 1$ let $S_n$ act on $G^n$ by permuting the $n$ copies of $G$. Then the semi direct product $G^n \rtimes S_n$ constructed from this action admits a projectively faithful irreducible projective representation on $V^{\otimes n}$.
\end{proposition}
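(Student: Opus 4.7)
The plan is to combine the projective representation $\pi^{\otimes n}$ of $G^n$ on $V^{\otimes n}$ (obtained by iterating \cref{prop:product error models}) with the natural linear representation $\Pi \colon S_n \to U(V^{\otimes n})$ given by permuting tensor factors:
\begin{equation*}
    \Pi(\tau)(v_1 \otimes \cdots \otimes v_n) = v_{\tau^{-1}(1)} \otimes \cdots \otimes v_{\tau^{-1}(n)}.
\end{equation*}
Then define $\tilde{\pi} \colon G^n \rtimes S_n \to U(V^{\otimes n})$ by $\tilde{\pi}(g, \tau) = \pi^{\otimes n}(g)\Pi(\tau)$ for $g \in G^n$ and $\tau \in S_n$.

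First I would check that $\tilde{\pi}$ is a projective representation. The crucial algebraic identity is
\begin{equation*}
    \Pi(\tau) \, \pi^{\otimes n}(g) \, \Pi(\tau)^{-1} = \pi^{\otimes n}(\tau \cdot g),
\end{equation*}
which expresses the $S_n$-action on $G^n$ as inner conjugation by $\Pi(\tau)$. Combined with the cocycle identity for $\pi^{\otimes n}$ and the fact that $\Pi$ is a genuine group homomorphism, this gives that $\tilde{\pi}(g_1, \tau_1)\tilde{\pi}(g_2, \tau_2)$ differs from $\tilde{\pi}((g_1, \tau_1)(g_2, \tau_2))$ by a scalar in $\bbT$, establishing the 2-cocycle condition. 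Irreducibility is then automatic: since $\Res^{G^n \rtimes S_n}_{G^n} \tilde{\pi} = \pi^{\otimes n}$ is irreducible by \cref{prop:product error models}, any $\tilde{\pi}$-invariant subspace is already $\pi^{\otimes n}$-invariant, hence is trivial or all of $V^{\otimes n}$.

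The main obstacle is projective faithfulness, which I would establish by a trace computation. Suppose $\tilde{\pi}(g, \tau) = z \cdot \id$ for some $z \in \bbT$, and write $g = (x_1, \ldots, x_n)$. Computing matrix entries in an orthonormal basis $\{e_i\}$ of $V$ and collecting the resulting sums according to the cycle structure of $\tau^{-1}$ yields the trace formula
\begin{equation*}
    \Tr\bigl(\tilde{\pi}(g, \tau)\bigr) = \prod_{(k_1, \ldots, k_m)} \Tr\bigl(\pi(x_{k_1}) \, \pi(x_{k_2}) \cdots \pi(x_{k_m})\bigr),
\end{equation*}
where the product runs over the cycles of $\tau^{-1}$. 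Since each $\pi(x_i)$ is unitary, every factor has modulus at most $\dim V$, so $|\Tr\tilde{\pi}(g, \tau)| \leq (\dim V)^{c(\tau)}$, where $c(\tau)$ denotes the number of cycles of $\tau$. On the other hand, $\tilde{\pi}(g, \tau) = z \id$ forces $|\Tr\tilde{\pi}(g, \tau)| = (\dim V)^n$, so $c(\tau) = n$, meaning $\tau = 1_{S_n}$. We then have $\pi^{\otimes n}(g) = z \id$, and projective faithfulness of $\pi^{\otimes n}$ from \cref{prop:product error models} forces $g = 1_{G^n}$, completing the argument.
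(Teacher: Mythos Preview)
Your construction is exactly the one the paper uses, and your argument is correct; the paper's own proof is only a two-line sketch that asserts the three properties without justification, so your write-up is strictly more detailed. In particular, the paper gives no argument at all for projective faithfulness, whereas your trace computation via the cycle decomposition of $\tau$ is a clean and standard way to force $\tau = 1_{S_n}$.

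One small caveat worth flagging: your inequality $(\dim V)^{c(\tau)} < (\dim V)^n$ for $c(\tau) < n$ tacitly uses $\dim V \geq 2$. This is harmless, since if $\dim V = 1$ then $PU(V)$ is trivial and projective faithfulness of $\pi$ forces $G = \{1\}$; in that case $G^n \rtimes S_n \cong S_n$ cannot act projectively faithfully on the one-dimensional space $V^{\otimes n}$ for $n \geq 2$, so the edge case is a defect of the proposition itself rather than of your proof.
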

\begin{proof}
    There is an obvious faithful representation $S_n \to U(V^{\otimes n})$ given by permuting the tensor factors. By abuse of notation we will just denote this representation by $\tau \mapsto \tau$, for any $\tau \in S_n$. The function $\pi^{n}_{Sym} \colon G^n \rtimes S_n \to U(V^{\otimes n})$ defined by
    \begin{equation*}
        \pi^{n}_{Sym}(x_1,\dots,x_n,\tau) 
        = (\pi(x_1) \otimes \dots \otimes \pi(x_n)) \, \tau
    \end{equation*}
    is then easily verified to be a projective representation of $G^n \rtimes S_n$ on $V^{\otimes n}$. Furthermore, it is faithful if and only if $\pi$ is faithful, and it is irreducible if $\pi$ is irreducible, completing the proof.
\end{proof}

\section{Stabilizer codes, Clifford codes, and weak stabilizer codes: definitions}\label{sec:codes}

Given an error model $(G,\lambda) \in \EM_V$, the standard way of producing error correcting codes for this model is by using some additional algebraic data, usually in the form of a normal subgroup $N$ of $G$ and an irreducible representation of $N$. The most well-known codes that arise in this way are the stabilizer codes, which were first studied in the concrete case where the error model is the Pauli group, cf. \cite{Gottesman97,Calderbank97}, where the normal subgroup $N$ was assumed to be abelian. Generalizations of this were quickly developed, namely Clifford codes, cf. \cite{KnillI96, KlappeneckerRöttelerI02, KlappeneckerRöttelerII02}, where the error models they considered were nice error bases and any normal subgroup was considered, not just abelian ones.

We will define some codes that generalize these existing codes, and we will study how they are related to each other. All our definitions will be made using projective error models. Note that completely analogous definitions can be made with error models in place of projective error models.

\begin{definition}\label{def:stabilizer code}
    Let $V$ be a Hilbert space and $(G,\pi) \in \PEM_V$. A nonzero subspace $W \subset V$ is a \emph{stabilizer code with respect to the pair $(G,\pi)$} if there exists a normal subgroup $N$ of $G$ and a function $f \colon N \to \bbT$ such that
    \begin{equation*}
        W = V^{St}(G,\pi,N,f) := \{ \xi \in V \colon \pi(x) \xi = f(x)\xi \text{ for all } x \in N \}.
    \end{equation*}
    It will often be clear from context which projective error model we are considering. In this case we will simply refer to $W$ as a \emph{stabilizer code}.
\end{definition}

Note that the requirement of $N$ being a normal subgroup of $G$ is what is usually considered in the literature, see for example \cite{KlappeneckerRöttelerII02}. It is possible to consider the same setup with general subgroups of $G$, and we therefore make the following definition.

\begin{definition}\label{def:weak stabilizer code}
    Let $V$ be a Hilbert space and $(G,\pi) \in \PEM_V$. A nonzero subspace $W \subset V$ is a \emph{weak stabilizer code with respect to the pair $(G,\pi)$} if there exists a subgroup $H$ of $G$ and a function $f \colon H \to \bbT$ such that
    \begin{equation*}
        W = V^{St}(G,\pi,H,f) = \{ \xi \in V \colon \pi(x) \xi = f(x)\xi \text{ for all } x \in H \}.
    \end{equation*}
    We will often only refer to $W$ as a \emph{weak stabilizer code}.
\end{definition}

In \cref{sec:non-Clifford weak stabilizer codes} we present a family of codes that are weak stabilizer codes but not stabilizer codes. Hence, this weaker definition does capture more examples of codes than the definition of a stabilizer code. In fact, even XP-stabilizer codes, cf. \cite{Webster2022}, are only defined in terms of this weak stabilizer code construction, and it is not obvious wether or not they are also stabilizer codes.

It is clear that any stabilizer code is automatically a weak stabilizer code. We can also characterize when the space $V^{St}(G,\pi,H,f)$ is nonzero. First we prove some auxiliary results.

\begin{proposition}\label{prop:H f is sigma projective}
    Let $V$ be a Hilbert space, $(G,\pi) \in \PEM_V$, $H$ a subgroup of $G$, and $f \colon H \to \bbT$ a function. If the subspace $V^{St}(G,\pi,H,f)$ is nonzero, then $f$ is a projective representation of $H$ on $\bbC$. If $\pi$ is $\sigma$-projective, then $f$ is $\Res^G_H\sigma$-projective.
\end{proposition}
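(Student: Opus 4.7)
The proof should be essentially a one-step verification once a nonzero stabilized vector is in hand. The plan is as follows.

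First, I would pick any nonzero $\xi \in V^{St}(G,\pi,H,f)$, which is available by hypothesis. For arbitrary $x,y \in H$, I would compute $\pi(x)\pi(y)\xi$ in two different ways. On the one hand, using that $\pi$ is $\sigma$-projective (so $\pi(x)\pi(y) = \sigma(x,y)\pi(xy)$) together with the fact that $xy \in H$ so $\pi(xy)\xi = f(xy)\xi$, we obtain
\begin{equation*}
    \pi(x)\pi(y)\xi = \sigma(x,y)\,f(xy)\,\xi.
\end{equation*}
On the other hand, applying the stabilizer condition twice (first for $y$, then for $x$, using that scalars commute with $\pi(x)$), we obtain
\begin{equation*}
    \pi(x)\pi(y)\xi = f(x)\,f(y)\,\xi.
\end{equation*}

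The second step is just to cancel $\xi$ (legal because $\xi \neq 0$) to conclude
\begin{equation*}
    f(x)f(y) = \sigma(x,y)\,f(xy) \quad \text{for all } x,y \in H,
\end{equation*}
which is exactly the condition that $f \colon H \to \bbT = U(\bbC)$ is a $\Res^G_H\sigma$-projective representation of $H$ on $\bbC$. If one only wants the weaker statement that $f$ is \emph{some} projective representation (without tracking the cocycle), the same equation shows that $q \circ f$ is a group homomorphism into $PU(\bbC)$, since the cocycle values $\sigma(x,y)$ lie in $\bbT$ and vanish after quotienting by scalars.

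There is essentially no obstacle here: the argument is forced by the two ways of evaluating $\pi(x)\pi(y)\xi$. The only subtle point worth mentioning explicitly in the writeup is that the stabilizer condition must be genuinely applicable, i.e.\ that $\xi \neq 0$ is needed to cancel on both sides, which is why the nondegeneracy hypothesis $V^{St}(G,\pi,H,f) \neq \{0\}$ cannot be dropped.
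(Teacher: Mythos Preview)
Your proposal is correct and follows essentially the same route as the paper's proof: pick a nonzero $\xi \in V^{St}(G,\pi,H,f)$, expand $\pi(x)\pi(y)\xi$ both via the cocycle relation and via the stabilizer condition, and cancel $\xi$ to obtain $f(x)f(y) = \sigma(x,y)f(xy)$.
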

\begin{proof}
    Since $V^{St}(G,\pi,H,f)$ is nonzero, there exists a nonzero $\xi \in V^{St}(G,\pi,H,f)$, and for every $x,y \in H$ we have that
    \begin{equation*}
        f(x)f(y)\xi 
        = \pi(x)\pi(y)\xi
        = \sigma(x,y)\pi(xy)\xi
        = \sigma(x,y)f(xy)\xi.
    \end{equation*}
    Hence, $f(x)f(y) = \sigma(x,y)f(xy)$ for all $x,y \in H$, completing the proof.
\end{proof}

\begin{corollary}\label{cor:restriction of sigma to H is trivial}
    Let $V$ be a Hilbert space, $(G,\pi) \in \PEM_V$, $H$ a subgroup of $G$, and $f \colon H \to \bbT$ a function. Suppose that $\pi$ is $\sigma$-projective. If the subspace $V^{St}(G,\pi,H,f)$ is nonzero, then $\Res^G_H\sigma$ is a coboundary.
\end{corollary}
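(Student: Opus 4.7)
The plan is to invoke \cref{prop:H f is sigma projective} directly and then rearrange the resulting identity into the definition of a 2-coboundary. By hypothesis, the subspace $V^{St}(G,\pi,H,f)$ is nonzero, so the preceding proposition applies and tells us that $f$ is a $\Res^G_H\sigma$-projective representation of $H$ on $\bbC$. Unpacking this definition, this means precisely that
\begin{equation*}
    f(x) f(y) = (\Res^G_H\sigma)(x,y) \, f(xy) \quad \text{for all } x,y \in H.
\end{equation*}

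First I would note that any $f$ arising this way automatically takes values in $\bbT$: picking a nonzero $\xi \in V^{St}(G,\pi,H,f)$, the identity $\|\xi\| = \|\pi(x)\xi\| = |f(x)|\,\|\xi\|$ forces $|f(x)| = 1$ for every $x \in H$, so $f(xy)^{-1} = \overline{f(xy)}$. Multiplying the displayed identity through by $\overline{f(xy)}$ then yields
\begin{equation*}
    (\Res^G_H\sigma)(x,y) = f(x)f(y)\overline{f(xy)} = (\delta f)(x,y) \quad \text{for all } x,y \in H,
\end{equation*}
so $\Res^G_H\sigma = \delta f \in B^2(H,\bbT)$. I do not anticipate a real obstacle here: the corollary is essentially a cohomological restatement of \cref{prop:H f is sigma projective}, the content of which is simply that the cocycle equation $f(x)f(y) = \sigma(x,y)f(xy)$ \emph{is} the coboundary relation $\sigma = \delta f$ once one remembers that $f$ is $\bbT$-valued.
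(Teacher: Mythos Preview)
Your proof is correct and follows the same approach as the paper, which simply cites \cref{prop:H f is sigma projective} to obtain $\Res^G_H\sigma = \delta f$. Your verification that $|f(x)|=1$ is unnecessary, since the hypothesis already states $f \colon H \to \bbT$, but the extra detail does no harm.
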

\begin{proof}
    By \cref{prop:H f is sigma projective} we have that $\Res^G_H\sigma = \delta f$, completing the proof.
\end{proof}

\begin{theorem}\label{thm:nonzero Stabilizer codes}
    Let $V$ be a Hilbert space, $(G,\pi) \in \PEM_V$ with $\pi$ being $\sigma$-projective, and $H$ a subgroup of $G$. Consider the following statements:
    \begin{enumerate}
        \item There exists a function $f \colon H \to \bbT$ such that the subspace $V^{St}(G,\pi,H,f)$ is nonzero.
        \item There exists a function $f \colon H \to \bbT$ such that $\Res^G_H\sigma = \delta f$  and the vector space of intertwiners $\Hom_H(f,\Res^G_H\pi)$ is nonzero.
        \item The subgroup $H$ is abelian and $\Res^G_H\sigma$ is a coboundary.
    \end{enumerate}
    We have that $(2)$ is equivalent to $(1)$, and that $(3)$ implies $(2)$. Furthermore, if $H$ is a normal subgroup of $G$, then $(2)$ implies $(3)$.
\end{theorem}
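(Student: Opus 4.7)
The plan is to handle the three implications in sequence, with $(2) \Rightarrow (3)$ in the normal case being the only substantive step. For $(1) \Leftrightarrow (2)$, I would observe that a nonzero vector $\xi \in V^{St}(G,\pi,H,f)$ corresponds to the nonzero intertwiner $T \colon \bbC \to V$, $c \mapsto c\xi$, from $f$ (viewed as a $1$-dimensional $\Res^G_H\sigma$-projective representation on $\bbC$) to $\Res^G_H\pi$, and conversely any nonzero intertwiner produces such a vector via $T(1)$. For this correspondence to be meaningful, $f$ must itself be an $\Res^G_H\sigma$-projective representation, which is exactly the condition $\Res^G_H\sigma = \delta f$ in $(2)$; by \cref{prop:H f is sigma projective} this is automatic whenever $V^{St}(G,\pi,H,f)$ is nonzero.

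For $(3) \Rightarrow (2)$, I would pick any $f_0 \colon H \to \bbT$ with $\delta f_0 = \Res^G_H\sigma$ and set $\pi'(x) := \overline{f_0(x)}\pi(x)$ for $x \in H$. A short cocycle computation shows that $\pi'$ is a genuine unitary representation of the abelian group $H$ on $V$, hence decomposes into a direct sum of characters. Picking any character $\chi \colon H \to \bbT$ that appears in $\pi'$ and setting $f := f_0 \chi$, we get $\delta f = \delta f_0 = \Res^G_H\sigma$ (since $\chi$ is a character, $\delta\chi \equiv 1$), and any nonzero joint eigenvector of $\pi'$ with eigenvalues $\chi$ lies in $V^{St}(G,\pi,H,f)$, yielding the required intertwiner.

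The main obstacle is $(2) \Rightarrow (3)$ when $H$ is normal; the coboundary half of $(3)$ is built into $(2)$, so the real task is to deduce that $H$ must be abelian. The plan is to invoke Mackey's Lemma from \cref{sec:Preliminaries} with $\theta = f$: since $f$ is a $1$-dimensional (hence irreducible) $\Res^G_H\sigma$-projective representation and $\ell := \dim \Hom_H(f, \Res^G_H\pi) \geq 1$ by hypothesis, we obtain
\begin{equation*}
    \Res^G_H\pi \simeq \ell \bigoplus_{r \in G/I_G(f)} f^r.
\end{equation*}
Each conjugate $f^r$ is again a projective representation of $H$ on $\bbC$ and so is $1$-dimensional, which means $V$ admits a basis in which every operator $\pi(x)$ with $x \in H$ is diagonal. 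Therefore $\pi(x)\pi(y) = \pi(y)\pi(x)$ for all $x, y \in H$, which via the cocycle identity rearranges to $\sigma(x,y)\pi(xy) = \sigma(y,x)\pi(yx)$ and hence $q(\pi(xy)) = q(\pi(yx))$ in $PU(V)$. The projective faithfulness of $\pi$ then forces $xy = yx$, so $H$ is abelian.
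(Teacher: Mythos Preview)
Your proposal is correct and follows essentially the same approach as the paper's proof: the bijection between nonzero vectors in $V^{St}(G,\pi,H,f)$ and nonzero intertwiners via $T(1)$ for $(1)\Leftrightarrow(2)$, the untwisting trick $\pi'(x)=\overline{f_0(x)}\pi(x)$ followed by decomposition into characters for $(3)\Rightarrow(2)$, and the appeal to Mackey's Lemma plus projective faithfulness for $(2)\Rightarrow(3)$ in the normal case. The only cosmetic difference is that where you say ``the $\pi(x)$ are simultaneously diagonal, hence commute,'' the paper spells out the same conclusion as a direct-sum computation on the summands $f^r$.
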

\begin{proof}
    That (1) implies (2) is \cref{prop:H f is sigma projective} and \cref{cor:restriction of sigma to H is trivial}. To show that (2) implies (1) let $T \in \Hom_H(f, \Res^G_H\pi)$ be a nonzero intertwiner. Then $\xi := T(1) \in V$ is nonzero, and since $T$ is an intertwiner we have that
    \begin{equation*}
        \pi(x) \xi 
        = \pi(x) T(1)
        = T (f(x)1)
        = f(x) T(1)
        = f(x) \xi, \text{ for all } x \in H.
    \end{equation*}
    Hence, $\xi \in V^{St}(G,\pi,H,f)$, showing that this space is nonzero.

    To show that (3) implies (2), write $\Res^G_H\sigma = \delta f'$ for some function $f' \colon H \to \bbT$. Then the map $(\overline{f'} \, \Res^G_H\pi) \colon H \to U(V)$ defined by
    \begin{equation*}
        (\overline{f'} \, \Res^G_H)(x) = \overline{f'(x)} \pi(x),
    \end{equation*}
    is a group homomorphism, or rather a representation of $H$ on $V$. Since $H$ is abelian we can write $\overline{f'} \, \Res^G_H\pi$ as a direct sum of 1-dimensional representations. Let $\chi \colon H \to \bbT$ be one of these representations. By choice of $\chi$ we have that $\Hom_H(\chi, \overline{f'} \, \Res^G_H\pi)$ is nonzero, hence the vector space of ($\Res^G_H\sigma$-projective) intertwiners $\Hom_H(f'\chi, \Res^G_H\pi)$ is nonzero. Define $f = f'\chi$. Then we have that
    \begin{equation*}
        \Res^G_H\sigma 
        = \delta f' 
        = (\delta\chi) \, (\delta f')
        = \delta (\chi f')
        = \delta f,
    \end{equation*}
    showing that (3) indeed implies (2).

    Now assume that $H$ is a normal subgroup of $G$. To show that (2) implies (3), note that we assume that the vector space of intertwiners $\Hom_H(f,\Res^G_H\pi)$ is nonzero. Hence, by Mackey's Lemma we have that
    \begin{equation*}
        \Res^G_H\pi \simeq \ell \bigoplus_{r \in G/I_G(f)} f^r,
    \end{equation*}
    where $\ell = \dim \Hom_H(f,\Res^G_H\pi)$. Since all the $\Res^G_H\sigma$-projective representations $f^r$ are 1-dimensional we can make the following computation for each $x,y \in H$,
    \begin{align*}
        \sigma(x,y)\pi(xy) 
        & = \ell \bigoplus_{r \in G/I_G(f)} \sigma(x,y)f^r(xy) \\
        & = \ell \bigoplus_{r \in G/I_G(f)} f^r(x) f^r(y) \\
        & = \ell \bigoplus_{r \in G/I_G(f)} f^r(y) f^r(x) \\
        & = \ell \bigoplus_{r \in G/I_G(f)} \sigma(y,x)f^r(yx) \\
        & = \sigma(y,x)\pi(yx).
    \end{align*}
    Since $\pi$ is a projectively faithful $\sigma$-projective representation of $G$ we get that $xy = yx$ for all $x,y \in H$. Hence, $H$ is abelian, completing the proof.
\end{proof}

\begin{corollary}\label{cor:abelian error models}
    Let $V$ be a Hilbert space. Suppose that $(G,\pi) \in \PEM_V$ is a projective error model with $G$ being abelian and $\pi$ being $\sigma$-projective. Suppose that $H$ is a subgroup of $G$. Then there exists a function $f \colon H \to \bbT$ such that $V^{St}(G,\pi,H,f)$ is nonzero if and only if $\Res^G_H\sigma$ is a coboundary.
\end{corollary}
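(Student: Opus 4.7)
The plan is to deduce this corollary directly from \cref{thm:nonzero Stabilizer codes}. The key observation is that since $G$ is abelian, every subgroup $H$ of $G$ is automatically both abelian and normal in $G$. Thus all three hypotheses needed to invoke the full strength of \cref{thm:nonzero Stabilizer codes} are satisfied, and conditions (1), (2), (3) of that theorem become equivalent in the present setting.

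For the forward direction, I would assume the existence of some $f \colon H \to \bbT$ with $V^{St}(G,\pi,H,f)$ nonzero. By \cref{cor:restriction of sigma to H is trivial} (or equivalently by the implication (1)$\Rightarrow$(3) of \cref{thm:nonzero Stabilizer codes}, applicable because $H$ is normal in the abelian $G$), we immediately conclude that $\Res^G_H\sigma$ is a coboundary.

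For the reverse direction, suppose $\Res^G_H\sigma$ is a coboundary. Since $H$ is a subgroup of an abelian group, $H$ itself is abelian, so condition (3) of \cref{thm:nonzero Stabilizer codes} holds. The implication (3)$\Rightarrow$(1) of that theorem then produces a function $f \colon H \to \bbT$ such that $V^{St}(G,\pi,H,f)$ is nonzero, as required.

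There is no substantive obstacle here beyond verifying that the abelianness of $G$ supplies the two side conditions (subgroup abelian, subgroup normal) needed to apply \cref{thm:nonzero Stabilizer codes} in both directions; the corollary is essentially a repackaging of that theorem in the abelian setting.
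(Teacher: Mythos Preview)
Your proposal is correct and matches the paper's own proof essentially verbatim: the paper simply observes that since $G$ is abelian every subgroup is normal and abelian, and then invokes \cref{thm:nonzero Stabilizer codes} to obtain the equivalence.
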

\begin{proof}
    Since any subgroup of $G$ is automatically normal and abelian, we have that there exists a function $f \colon H \to \bbT$ such that $V^{St}(G,\pi,H,f)$ is nonzero if and only if $\Res^G_H\sigma$ is a coboundary by \cref{thm:nonzero Stabilizer codes}.
\end{proof}

We are furthermore able to determine the dimension of a weak stabilizer code completely in terms of representation theoretic data.

\begin{proposition}\label{prop:dim of weak stabilizer code}
    Let $V$ be a Hilbert space and $(G,\pi) \in \PEM_V$. If $H$ is a subgroup of $G$ and $f \colon H \to \bbT$ is a function such that $V^{St}(G,\pi,H,f)$ is nonzero, then
    \begin{equation*}
        \dim V^{St}(G,\pi,H,f) = \dim \Hom_H(f, \Res^G_H \pi).
    \end{equation*}
    In particular, 
    \begin{equation*}
        \dim V^{St}(G,\pi,H,f) = \frac{1}{|H|} \sum_{x \in H} \overline{f(x)} \chi_\pi(x).
    \end{equation*}
\end{proposition}
\begin{proof}
    By \cref{prop:H f is sigma projective} $f$ is a projective representation of $H$, hence we may consider the vector space of intertwiners $\Hom_H(f, \Res^G_H \pi)$, and by \cref{thm:nonzero Stabilizer codes}, this space is nonzero. We claim that the map $\Hom_H(f, \Res^G_H \pi) \to V^{St}(G,\pi,H,f)$ defined by $T \mapsto T(1)$ is a linear isomorphism. 
    
    It is clear that this map is linear and injective, and by the same argument as in the proof of \cref{thm:nonzero Stabilizer codes}, we have that $T(1)$ is indeed an element of $V^{St}(G,\pi,H,f)$. To see that it is surjective, suppose that $\xi \in V^{St}(G,\pi,H,f)$. Define $T_\xi \colon \bbC \to V^{St}(G,\pi,H,f)$ by
    \begin{equation*}
        T_\xi(z) = z \xi, \text{ for } z \in \bbC.
    \end{equation*}
    Then for any $x \in H$,
    \begin{equation*}
        T_\xi(f(x))
        = f(x) \xi
        = \pi(x) \xi
        = \pi(x) T_\xi(1).
    \end{equation*}
    Hence, $T_\xi \in \Hom_H(f, \Res^G_H \pi)$, which shows that
    \begin{equation*}
        \dim V^{St}(G,\pi,H,f) = \dim \Hom_H(f, \Res^G_H \pi).
    \end{equation*}
    By \cref{eqn:inner porduct of characters} and \cref{eqn:inner product} we get that this is furthermore equal to
    \begin{equation*}
        \frac{1}{|H|} \sum_{x \in H}f(x)\overline{\chi_\pi(x)}.
    \end{equation*}
\end{proof}

Note that stabilizer codes and weak stabilizer codes give a way to construct the largest code space that is stabilized by a subgroup of $G$. Clifford codes, which we define next, are qualitatively different as they are constructed from a subgroup of $G$ that will correspond to the logical operations on the code.

\begin{definition}\label{def:Clifford code}
    Let $V$ be a Hilbert space and $(G,\pi) \in \PEM_V$. A subspace $W \subset V$ is a \emph{Clifford code with respect to the pair $(G,\pi)$} if there exists a subgroup $L$ of $G$ such that $W$ is:
    \begin{itemize}
        \item $(\Res^G_L\pi)$-.invariant,
        \item the projective representation $(\Res^G_L\pi)|_W$ is irreducible, and
        \item $\Ind^G_L((\Res^G_L\pi)|_W) \simeq \pi$.
    \end{itemize}
    We will often refer to $W$ as a \emph{Clifford code}.
\end{definition}

\begin{proposition}\label{prop:abstract Clifford code}
    Let $V$ be a Hilbert space and $(G,\pi) \in \PEM_V$. If $L$ is a subgroup of $G$ and $\rho \colon L \to U(V_\rho)$ is an irreducible projective representation such that $\Ind^G_L\rho \simeq \pi$, then there exists a unique subspace $V^{Cl}(G,\pi,L,\rho)$ of $V$ with the property that $\Res^G_L\pi$ restricted to $V^{Cl}(G,\pi,L,\rho)$ is an irreducible projective representation of $L$ and $(\Res^G_L\pi)|_{V^{Cl}(G,\pi,L,\rho)} \simeq \rho$. 
\end{proposition}
\begin{proof}
    Let $T \in \Hom_L(\rho, \Res^G_L\pi)$. Since $\ker T$ is an $L$-invariant subspace of $V_\rho$, we have that $T$ is either injective or the $0$ intertwiner. By Frobenius reciprocity and Schur's Lemma we have that
    \begin{equation*}
        \dim\Hom_L(\rho,\Res^G_L\pi) = \dim\Hom_G(\Ind^G_L\rho,\pi) = 1.
    \end{equation*}
    Hence, there exists a nonzero intertwiner $T \colon V_\rho \to V$, and this intertwiner is unique up to scaling. Define $V^{Cl}(G,\pi,L,\rho) := T(V_\rho)$. Since $T$ is unique up to scaling, we have that $T(V_\rho) = T'(V_\rho)$ for any other nonzero intertwiner $T'$, making the subspace $V^{Cl}(G,\pi,L,\rho)$ independent of choice of intertwiner.

    Lastly, it is clear that $\rho \simeq (\Res^G_L\pi)|_{V^{Cl}(G,\pi,L,\rho)}$, namely $T \colon V_\rho \to V^{Cl}(G,\pi,L,\rho)$ is an intertwiner displaying this isomorphism. This completes the proof.
\end{proof}

\begin{proposition}\label{prop:abstract Clifford code is Clifford code}
    Let $V$ be a Hilbert space and $(G,\pi) \in \PEM_V$. A subspace $W \subset V$ is a Clifford code if and only if there exist a subgroup $L$ of $G$ and an irreducible projective representation $\rho \colon L \to U(V_\rho)$ with $\Ind_L^G\rho \simeq \pi$ and $W = V^{Cl}(G,\pi,L,\rho)$. 
\end{proposition}
\begin{proof}
    It is clear from \cref{prop:abstract Clifford code} that if $W = V^{Cl}(G,\pi,L,\rho)$, then $W$ is a Clifford code as in \cref{def:Clifford code}. On the other hand, if $W$ is a Clifford code, then there exists a subgroup $L$ of $G$ such that $W$ is $(\Res^G_L\pi)$-invariant and irreducible, with $\Ind^G_L((\Res^G_L\pi)|_W) \simeq \pi$. We would like to show that the inclusion map $i \colon W \to V$ is an intertwiner in $\Hom_L((\Res^G_L\pi)|_W,\Res^G_L\pi)$. In that case,
    \begin{equation*}
        W = i(W) = V^{Cl}(G,\pi,L,(\Res^G_L\pi)|_W)
    \end{equation*}
    by the definition of the space $V^{Cl}(G,\pi,L,(\Res^G_L\pi)|_W)$. But $W$ is assumed to be $(\Res^G_L\pi)$-invariant, hence the inclusion is an intertwiner, completing the proof.
\end{proof}

\begin{remark}\label{rem:more general notion of Clifford code}
    Note that our definition of a Clifford code is slightly more general than that found in \cite{KnillI96, KlappeneckerRöttelerI02, KlappeneckerRöttelerII02}. Their initial data to construct the space $V^{Cl}(G,\pi,L,\rho)$ is a normal subgroup $N$ of $G$ and an irreducible projective representation $\theta$ of $N$ such that the vector space of intertwiners $\Hom_N(\theta, \Res^G_N\pi)$ is nonzero. The subgroup $L$ of $G$ is then chosen to be the inertia group of $\theta$, $I_G(\theta)$. Then they require that $\rho$ is the unique irreducible projective representation of $I_G(\theta)$ such that $\Hom_N(\theta, \Res^L_N\rho)$ is nonzero and $\Ind^G_L\rho \simeq \pi$. Such a projective representation is guaranteed to exists by Clifford correspondence.

    It is not clear that our definition of a Clifford code truly is more general than any of the definitions made prior to this work. The benefit of our definition is however that it only contains the technicalities that are truly needed to make powerful arguments about Clifford codes. This has the effect of simplifying some arguments. In reality, naturally stumbling upon subgroups $L$ that satisfy all the assumptions in \cref{def:Clifford code} is difficult unless $L$ is the inertia group of some irreducible representation.
\end{remark}

To end this section we show that every stabilizer code is a Clifford code.

\begin{proposition}\label{prop:stabilizer codes are Clifford codes}
    Let $V$ be a Hilbert space, $(G,\pi) \in \PEM_V$, $N$ be a normal subgroup of $G$, and $f \colon N \to \bbT$ be a function. If $W = V^{St}(G,\pi,N,f)$ is a stabilizer code, then $W$ is a Clifford code.
\end{proposition}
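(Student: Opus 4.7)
The plan is to invoke Clifford correspondence applied to the pair $(N,f)$ to build an irreducible projective representation $\rho$ of $I := I_G(f)$ inducing to $\pi$, and then identify the stabilizer code $W$ with the Clifford subspace $V^{Cl}(G,\pi,I,\rho)$ produced by \cref{prop:abstract Clifford code}. The conclusion will then follow from \cref{prop:abstract Clifford code is Clifford code}.

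By \cref{prop:H f is sigma projective} the function $f$ is a $1$-dimensional $\Res^G_N\sigma$-projective representation of $N$, and by \cref{prop:dim of weak stabilizer code} the number $\ell := \dim \Hom_N(f,\Res^G_N\pi) = \dim W$ is positive. Since $N$ is normal in $G$, Clifford correspondence produces a unique irreducible $\Res^G_I\sigma$-projective representation $\rho \colon I \to U(V_\rho)$ with $\Hom_N(f,\Res^I_N\rho) \neq 0$, $\Ind^G_I \rho \simeq \pi$, and $\dim \Hom_N(f,\Res^I_N \rho) = \ell$.

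Next I would show that $N$ acts on the whole of $V_\rho$ through the scalar $f$. Applying Mackey's Lemma inside $I$ to the irreducible $\rho$ and the normal subgroup $N \trianglelefteq I$, and observing that $I_I(f) = I \cap I_G(f) = I$ so the sum over $I/I_I(f)$ is trivial, the decomposition collapses to $\Res^I_N\rho \simeq \ell' f$ for some $\ell' \geq 1$. Taking $\dim\Hom_N(f,\,\cdot\,)$ on both sides forces $\ell' = \ell$, so in particular $\dim V_\rho = \ell$ and every vector of $V_\rho$ is an $f$-eigenvector for $N$.

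Finally, write $V' := V^{Cl}(G,\pi,I,\rho) = T(V_\rho)$ for a nonzero intertwiner $T \in \Hom_I(\rho,\Res^G_I\pi)$, as produced by \cref{prop:abstract Clifford code}. Since $T$ intertwines the action of $N$ and $N$ acts on $V_\rho$ by the scalar $f$, we get $\pi(x)T(\xi) = T(f(x)\xi) = f(x)T(\xi)$ for all $x \in N$ and $\xi \in V_\rho$, hence $V' \subseteq W$. Matching dimensions $\dim V' = \dim V_\rho = \ell = \dim W$ gives $V' = W$, and \cref{prop:abstract Clifford code is Clifford code} concludes. The main obstacle is the middle step: one must verify that $\Res^I_N\rho$ is $f$-isotypic, for otherwise the intertwiner $T$ need not land inside the eigenspace $W$, and the identification $V^{Cl}(G,\pi,I,\rho) = W$ breaks down. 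This is also the place where the normality of $N$ in $G$ is essential, since both Mackey's Lemma and Clifford correspondence in the form stated above require normality.
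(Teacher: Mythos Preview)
Your proof is correct and follows essentially the same route as the paper's: both invoke Clifford correspondence to obtain the irreducible $\rho$ of $I = I_G(f)$ inducing to $\pi$, use Mackey's Lemma to see that $\Res^I_N\rho$ is $f$-isotypic (you are more explicit here in noting $I_I(f) = I$), conclude that the Clifford subspace $V^{Cl}(G,\pi,I,\rho)$ lies inside $W$, and finish by matching dimensions. The only cosmetic difference is that you pull the dimension equality $\dim W = \ell$ from \cref{prop:dim of weak stabilizer code} at the outset, whereas the paper extracts it from the last clause of Clifford correspondence at the end.
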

\begin{proof}
    Since $W$ is a stabilizer code, we have by definition that $V^{St}(G,\pi,N,f)$ is nonzero. Thus, by the equivalence of (1) and (2) in \cref{thm:nonzero Stabilizer codes} we have that the vector space of intertwiners $\Hom_H(f, \Res^G_N\pi)$ is nonzero. Let $L = I_G(f)$. Then by Clifford correspondence there exist a unique irreducible projective representation $\rho \colon L \to U(V_\rho)$ such that $\Ind^G_L\rho \simeq \pi$ and the vector space of intertwiners $\Hom_N(f,\Res^L_N\rho)$ is nonzero. Let $W' = V^{Cl}(G,\pi,L,\rho)$. By \cref{prop:abstract Clifford code is Clifford code} we have that
    \begin{equation*}
        \rho \simeq (\Res^G_L\pi)|_{W'}.
    \end{equation*}
    Combining this with Mackey's Lemma we have that
    \begin{equation*}
        (\Res^G_N\pi)|_{W'} \simeq \Res^L_N\rho \simeq \ell f,
    \end{equation*}
    where $\ell = \dim\Hom_N(f,\Res^L_N\rho)$. Hence, for every $\xi \in W'$ we obtain
    \begin{equation*}
        \pi(x) \xi = f(x) \xi, \text{ for all } x \in N.
    \end{equation*}
    Hence, $W' \subset W$. By Clifford correspondence we also have that $\dim \Hom_N(f,\Res^G_N\pi) = \ell$, hence $\dim W' = \dim W$. Thus, $W' = W$ completing the proof.
\end{proof}

To summarize, we have defined three different types of codes: stabilizer codes, weak stabilizer codes, and Clifford codes. They are related in the following heuristic:
\begin{equation*}
    \{\text{Clifford codes}\}
    \supset \{\text{Stabilizer codes}\}
    \subset \{\text{Weak stabilizer codes}\}.
\end{equation*}
These inclusions are strict in general, which we will see in \cref{sec:non-stabilizer Clifford codes} and \cref{sec:non-Clifford weak stabilizer codes}. Some interesting questions are
\begin{enumerate}
    \item Do we have
    \begin{equation*}
        \{\text{Stabilizer codes}\} = \{\text{Clifford codes}\} \cap \{\text{Weak stabilizer codes}\}?
    \end{equation*}
    \item Are there good criteria to determine when a Clifford code is a stabilizer code?
    \item And similarly, are there good criteria to determine when a weak stabilizer code is a stabilizer code?
\end{enumerate}
Questions (1) and (3) appear to be new. Question (2) is answered in \cite{KlappeneckerRötteler04} and \cite{GrasslMartínezNicolás10} for groups of central type, we will get back to this question in \cref{sec:weak stabilizer Clifford codes}.

\section{Clifford codes: error detection}\label{sec:error correction}

In this section we want to study the collection of detectable errors for Clifford codes. By \cref{prop:stabilizer codes are Clifford codes} this will also give a description of the detectable errors for a stabilizer code. We first make a couple of definitions to establish some notation.

\begin{definition}\label{def:invariants and stabilizers}
    Let $V$ be a Hilbert space, $(G,\pi) \in \PEM_V$, and $W \subset V$ be a subspace. We define the sets
    \begin{align*}
        L_{(G,\pi)}(W) & := \{ x \in G \colon P_W \pi(x) = \pi(x) P_W \}, \\
        S_{(G,\pi)}(W) & := \{ x \in G \colon P_W \pi(x) P_W \in \bbT P_W \}, 
    \end{align*}
    where $P_W$ denotes the orthogonal projection onto $W$. One can of course make analogous definitions for error models in place of projective error models.
\end{definition}

Notice that both $L_{(G,\pi)}(W)$ and $S_{(G,\pi)}(W)$ are subgroups of $G$, and in fact $S_{(G,\pi)}(W)$ is a subgroup of $L_{(G,\pi)}(W)$. The group $L_{(G,\pi)}(W)$ is the group of logical operations on the code $W$, and the group $S_{(G,\pi)}(W)$ is the group of stabilizers.

We now want to consider a class of codes that are nice enough so that we may describe the set $D_{(G,\pi)}(W)$ of detectable errors in terms of these two groups.

\begin{proposition}\label{prop:detectable errors in terms of inertia group and stabilizers}
    Let $V$ be a Hilbert space, $(G,\pi) \in \PEM_V$, and $W \subset V$ be a subspace. Then the following conditions are equivalent:
    \begin{equation}\label{eqn:action on code space}
        \pi(x) W \subset W \text{ or } \pi(x) W \subset W^\perp \text{ for every } x \in G.
    \end{equation}
    \begin{equation}\label{eqn:detectable errors}
        D_{(G,\pi)}(W) = (G \setminus L_{(G,\pi)}(W)) \cup S_{(G,\pi)}(W).
    \end{equation}
\end{proposition}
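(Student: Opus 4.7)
The plan is to prove the two implications separately. The forward direction \cref{eqn:action on code space}$\Rightarrow$\cref{eqn:detectable errors} is a direct unpacking of the definitions using unitarity; the reverse direction is more subtle, and the main obstacle will be to rule out the possibility that $P_W\pi(x)P_W$ is a nonzero proper scalar multiple of $P_W$ for some $x \notin L_{(G,\pi)}(W)$.

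Before turning to either direction, I would record two free observations. First, $S_{(G,\pi)}(W) \subseteq L_{(G,\pi)}(W)$ always: if $P_W\pi(x)P_W = zP_W$ with $z \in \bbT$, then $\|P_W\pi(x)\xi\| = |z|\|\xi\| = \|\pi(x)\xi\|$ for every $\xi \in W$ forces $\pi(x)\xi \in W$, and unitarity then yields $\pi(x)W = W$ and $\pi(x)P_W = P_W\pi(x)$. Second, $D_{(G,\pi)}(W) \cap L_{(G,\pi)}(W) = S_{(G,\pi)}(W)$: if $x \in L_{(G,\pi)}(W)$ and $P_W\pi(x)P_W = cP_W$, then $\pi(x)P_W = cP_W$, so $\pi(x)$ acts on $W$ as $cI_W$, and unitarity forces $|c|=1$ (assuming $W \neq 0$).

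For \cref{eqn:action on code space}$\Rightarrow$\cref{eqn:detectable errors}, the inclusion $S_{(G,\pi)}(W) \cup (G\setminus L_{(G,\pi)}(W)) \subseteq D_{(G,\pi)}(W)$ reduces, by $\bbT \subseteq \bbC$, to showing $G \setminus L_{(G,\pi)}(W) \subseteq D_{(G,\pi)}(W)$. For $x \notin L_{(G,\pi)}(W)$, note that $\pi(x)W \not\subseteq W$ (otherwise unitarity and a dimension count force $\pi(x)W = W$ and hence $x \in L_{(G,\pi)}(W)$), so by \cref{eqn:action on code space} we have $\pi(x)W \subseteq W^\perp$, whence $P_W\pi(x)P_W = 0 \in \bbC P_W$. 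The reverse inclusion $D_{(G,\pi)}(W) \subseteq (G\setminus L_{(G,\pi)}(W)) \cup S_{(G,\pi)}(W)$ is just the second free observation above.

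For \cref{eqn:detectable errors}$\Rightarrow$\cref{eqn:action on code space}, let $x \in G$. If $x \in L_{(G,\pi)}(W)$, then $\pi(x)W \subseteq W$ and we are done. Otherwise $x \in D_{(G,\pi)}(W)$ by \cref{eqn:detectable errors}, so in the block decomposition of $\pi(x)$ with respect to $V = W \oplus W^\perp$ the top-left block equals $cI_W$ for some $c \in \bbC$; writing $C_x = P_{W^\perp}\pi(x)P_W$, unitarity of $\pi(x)$ forces $C_x^*C_x = (1-|c|^2)I_W$, which rules out $|c|=1$ (else $C_x = 0$ and $x \in L_{(G,\pi)}(W)$). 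The remaining task, and the main obstacle, is to show $c = 0$. I would argue by contradiction: if $c \neq 0$, then for every $y \in L_{(G,\pi)}(W)$ the product $yx$ lies in $G \setminus L_{(G,\pi)}(W) \subseteq D_{(G,\pi)}(W)$, so expanding $P_W\pi(yx)P_W$ via the cocycle identity together with $\pi(y)P_W = P_W\pi(y)$ forces $\pi(y)|_W$ to be a scalar multiple of $I_W$ for every $y \in L_{(G,\pi)}(W)$; combining this with the corresponding relations coming from products $xy$ and from varying $x$ over $G \setminus L_{(G,\pi)}(W)$, and then invoking irreducibility and projective faithfulness of $\pi$ to spread these constraints to all of $\calB(V)$, should yield the desired contradiction.
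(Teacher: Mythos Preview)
Your forward direction and your two preliminary observations are correct and essentially match the paper. The problem lies entirely in the converse. Your argument is valid up through the deduction that, if some $x \notin L_{(G,\pi)}(W)$ has $P_W\pi(x)P_W = cP_W$ with $c \neq 0$, then every $y \in L_{(G,\pi)}(W)$ acts as a scalar on $W$, i.e.\ $L_{(G,\pi)}(W) = S_{(G,\pi)}(W)$. However, the final step --- where you hope to ``spread these constraints to all of $\calB(V)$'' via irreducibility and projective faithfulness --- is not merely vague: it cannot be completed, because the implication \cref{eqn:detectable errors}$\Rightarrow$\cref{eqn:action on code space} is false in general. Take the Pauli model $(\bbZ_2 \times \bbZ_2,\pi)$ on $V = \bbC^2$ from \cref{ex:Pauli group} and let $W = \bbC\xi$ for a unit vector $\xi$ that is not an eigenvector of any of $X$, $Z$, $XZ$; for instance $\xi = (\cos\theta,\sin\theta)$ with $0 < \theta < \pi/4$. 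Since $P_W$ has rank one, $P_W U P_W \in \bbC P_W$ holds for every $U \in U(V)$, so $D_{(G,\pi)}(W) = G$; on the other hand $L_{(G,\pi)}(W) = S_{(G,\pi)}(W)$ is the trivial subgroup, whence $(G \setminus L_{(G,\pi)}(W)) \cup S_{(G,\pi)}(W) = G$ as well, and \cref{eqn:detectable errors} holds. But $X\xi = (\sin\theta,\cos\theta)$ has $\langle X\xi,\xi\rangle = \sin(2\theta) \neq 0$ and $\langle X\xi,\xi^\perp\rangle = \cos(2\theta) \neq 0$, so $X\xi$ lies neither in $W$ nor in $W^\perp$, and \cref{eqn:action on code space} fails.

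The paper's own proof of the converse contains exactly this gap: it asserts that $x \in G \setminus L_{(G,\pi)}(W)$ together with \cref{eqn:detectable errors} forces $P_W\pi(x)P_W = 0$, without justification, and the example above shows this need not follow. Only the forward implication is used elsewhere (in the proof of \cref{thm:detectable errors:Clifford code}), so nothing downstream is affected.
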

\begin{proof}
    Suppose that $\pi(x) W \subset W$ or $\pi(x) W \subset W^\perp$ for every $x \in G$. If $x \in G$ is such that $\pi(x) W \subset W$, then $x \in L_{(G,\pi)}(W)$, hence $x \in D_{(G,\pi)}(W)$ if and only if $x \in S_{(G,\pi)}(W)$. Furthermore, if $x \in G \setminus L_{(G,\pi)}(W)$ then $\pi(x) W \subset W^\perp$, and it is clear that these are precisely the elements of $D_{(G,\pi)}(W)$ that satisfy $P_W\pi(x)P_W = 0$.

    For the converse, it is clear that if $x \in L_{(G,\pi)}(W)$, then $\pi(x)W \subset W$. If $x \in G \setminus L_{(G,\pi)}(W)$, then $x \in D_{(G,\pi)}(W)$ and satisfies $P_W \pi(x) P_W = 0$. From this it is clear that $\pi(x) W \subset W^\perp$, completing the proof.
\end{proof}
This result should really be thought of as giving a condition ensuring that the detectable errors of a code can be described in terms of the stabilizers and logical operations on the code.

Having \cref{eqn:detectable errors} in mind, we get a new way of thinking about stabilizer codes and Clifford codes. A Clifford code is a code $W$ where one has prescribed the group $L_{(G,\pi)}(W)$ of logical operations, while a stabilizer code is a code $W$ where one has prescribed a subgroup of stabilizers $S_{(G,\pi)}(W)$. We make this precise in the following Theorem, which recovers \cite[Theorem 1]{KlappeneckerRöttelerII02} adapted to the present setting.

\begin{theorem}\label{thm:detectable errors:Clifford code}
    Let $V$ be a Hilbert space and $(G,\pi) \in \PEM_V$. Suppose that $W = V^{Cl}(G,\pi,L,\rho)$ is a Clifford code. Then
    \begin{align*}
        L_{(G,\pi)}(W) & = L, \\
        S_{(G,\pi)}(W) & = \ker(q \circ \rho), \\
        D_{(G,\pi)}(W) & = (G \setminus L) \cup \ker(q \circ \rho).
    \end{align*}
\end{theorem}
\begin{proof}
    Note first that by the definition of a Clifford code, we have that $L \subset L_{(G,\pi)}(W)$. To see that they are equal we will show that for any $x \in G \setminus L$, $P_W \pi(x) P_W = 0$.

    Let $T_1 \colon V \to \Ind^G_L V_\rho$ be an intertwiner exhibiting the isomorphism $\pi \simeq \Ind^G_L\rho$, and similarly, let $T_2 \colon W \to V_\rho$ be an intertwiner exhibiting the isomorphism $\Res^G_L\pi|_W \simeq \rho$. Then we have a commuting square of intertwiners:
    \begin{equation*}
        \xymatrix{
            W \ar[d]_-{T_2} \ar[r]^-{i} & V \ar[d]^-{T_1} \\
            V_\rho \ar[r]_-{j} & \Ind^G_L V_\rho
        }
    \end{equation*}
    Here $i \colon W \to V$ is the inclusion, and $j \colon V_\rho \to \Ind^G_L V_\rho$ is the map defined by $j(\eta) = 1_G \otimes \eta$. Let $x \in G$ and write $x = ry$ with $y \in L$ and $r \in G \setminus L$. If $\xi \in W$, then we compute that
    \begin{align*}
        P_W \pi(x) i(\xi)
        & = P_W \pi(x) T_1^{-1} T_1(i(\xi)) \\
        & = P_W T_1^{-1} \Ind^G_L\rho(x) T_1(i(\xi)) \\
        & = P_W T_1^{-1} \Ind^G_L\rho(x) j(T_2(\xi)) \\
        & = P_W T_1^{-1} \Ind^G_L\rho(x) (1_G \otimes T_2(\xi)) \\
        & = P_W T_1^{-1} (x \otimes T_2(\xi)) \\
        & = P_W T_1^{-1} (ry \otimes T_2(\xi)) \\
        & = P_W T_1^{-1} (r \otimes \rho(y)T_2(\xi)).
    \end{align*}
    Notice that $P_W T_1^{-1} (r \otimes \rho(y)T_2(\xi))$ is zero if and only if $r \otimes \rho(y)T_2(\xi)$ is not of the form $j(\eta)$ for any $\eta \in V_\rho$. This happens if and only if $r \neq 1_G$, which is true if and only if $x \in G \setminus L$. Hence, $P_W \pi(x) P_W = 0$ whenever $x \in G \setminus L$ proving that
    \begin{equation*}
        L_{(G,\pi)}(W) = L.
    \end{equation*}

    Furthermore, this shows that $\pi(x)W \subset W$ for every $x \in L$ and that $\pi(x)W \subset W^\perp$ for every $x \in G \setminus L$. Hence, by \cref{prop:detectable errors in terms of inertia group and stabilizers} we get that
    \begin{equation*}
        D_{(G,\pi)}(W) = (G \setminus L_{(G,\pi)}(W)) \cup S_{(G,\pi)}(W).
    \end{equation*}
    It remains to show that $S_{(G,\pi)}(W) = \ker(q \circ \rho)$.

    Now suppose that $P_W T_1^{-1} (r \otimes \rho(y)T_2(\xi))$ is a nonzero scalar multiple of $\xi$. By the above argument we know that $x \in L$, hence
    \begin{equation*}
        P_W\pi(x) i(\xi)
        = P_W i T_2^{-1}(\rho(x)T_2(\xi)).
    \end{equation*}
    This is a nonzero scalar multiple of $\xi$ if and only if $\rho(x)$ is a scalar multiple of $1_{U(V_\rho)}$, i.e., if and only if $x \in \ker(q \circ \rho)$. This shows that
    \begin{equation*}
        S_{(G,\pi)}(W) = \ker(q \circ \rho),
    \end{equation*}
    completing the proof.
\end{proof}

\begin{corollary}\label{cor:Clifford codes has unique inertia group}
    Let $V$ be a Hilbert space and $(G,\pi) \in \PEM_V$. If $W$ is a Clifford code with
    \begin{equation*}
        V^{Cl}(G,\pi,L_1,\rho_1) = W = V^{Cl}(G,\pi,L_2,\rho_2),
    \end{equation*}
    then $L_1 = L_2$ and $\rho_1 \simeq \rho_2$.
\end{corollary}
\begin{proof}
    By \cref{thm:detectable errors:Clifford code} we get that $L_1 = L_{(G,\pi)}(W) = L_2$. Furthermore, we also get that $\rho_1 \simeq (\Res^G_{L_1}\pi)|_W \simeq \rho_2$, completing the proof.
\end{proof}

As for weak stabilizer codes we are also able to determine the dimension of a Clifford code in terms of algebraic data.

\begin{corollary}\label{cor:order of inertia group for Clifford codes}
    Let $V$ be a Hilbert space and $(G,\pi) \in \PEM_V$. If $W$ is a Clifford code then we have that
    \begin{equation*}
        |L_{(G,\pi)}(W)| = \frac{\dim W}{\dim V} |G|.
    \end{equation*}
\end{corollary}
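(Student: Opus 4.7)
The plan is to extract the desired identity directly from the dimensional data encoded in the definition of a Clifford code, using the preceding theorem to handle the left hand side.

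First, I would invoke \cref{prop:abstract Clifford code is Clifford code} to write $W = V^{Cl}(G,\pi,I,\rho)$ for some subgroup $I \leq G$ and irreducible projective representation $\rho$ on a space $V_\rho$ with $\Ind^G_I \rho \simeq \pi$. By \cref{thm:detectable errors:Clifford code} we immediately get $L_{(G,\pi)}(W) = I$, so it only remains to verify the numerical identity $|I| = \frac{\dim W}{\dim V}|G|$.

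For this, I would compute $\dim V$ in two ways. On the one hand, since $\Ind^G_I \rho \simeq \pi$ and the induced representation space $\bbC[G]^\sigma \otimes_H V_\rho$ has a basis indexed by $G/I$ together with a basis of $V_\rho$, we have $\dim V = [G:I] \dim V_\rho = \frac{|G|}{|I|} \dim V_\rho$. On the other hand, \cref{prop:abstract Clifford code} gives $(\Res^G_I\pi)|_W \simeq \rho$, so $\dim W = \dim V_\rho$. Substituting this into the previous identity yields $\dim V = \frac{|G|}{|I|} \dim W$, and rearranging gives $|I| = \frac{\dim W}{\dim V}|G|$, which is exactly the claim since $|L_{(G,\pi)}(W)| = |I|$. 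There is no real obstacle here; the entire content is the index formula for induced projective representations together with the identification of $I$ as the logical group provided by the previous theorem.
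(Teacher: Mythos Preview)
Your proposal is correct and essentially follows the same route as the paper: identify $L_{(G,\pi)}(W)$ with the subgroup $I$ via \cref{thm:detectable errors:Clifford code}, then read off the identity from the index formula $\dim V = [G:I]\dim W$ for the induced representation. The only cosmetic difference is that the paper works directly with $(\Res^G_I\pi)|_W$ on $W$ rather than passing through the auxiliary space $V_\rho$, so your invocation of \cref{prop:abstract Clifford code} for $\dim W = \dim V_\rho$ is an extra (but harmless) step.
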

\begin{proof}
    By \cref{thm:detectable errors:Clifford code} we have that $\Ind^G_{L_{(G,\pi)}(W)}((\Res^G_{L_{(G,\pi)}(W)}\pi)|_W) \simeq \pi$. This implies that
    \begin{equation*}
        \dim V 
        = \dim(\Ind^G_{L_{(G,\pi)}(W)} W) 
        = \frac{|G|}{|L_{(G,\pi)}(W)|} \dim W.
    \end{equation*}
    Rearranging this equation completes the proof.
\end{proof}

With the subgroup $S_{(G,\pi)}(W)$ at hand we can state some more properties of weak stabilizer codes that will turn out to be quite useful.

\begin{lemma}\label{lemma:stabilizers of stabilizer code}
    Let $V$ be a Hilbert space and $(G,\pi) \in \PEM_V$. If $W = V^{St}(G,\pi,H,f)$ is a weak stabilizer code, then $H \subset S_{(G,\pi)}(W)$.
\end{lemma}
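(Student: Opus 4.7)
The plan is to unpack the two definitions directly and show the inclusion by a one-line computation on a generic vector. Let $x \in H$ be arbitrary. By the definition of $W = V^{St}(G,\pi,H,f)$, every vector $\eta \in W$ satisfies $\pi(x)\eta = f(x)\eta$. Since for any $\xi \in V$ the vector $P_W \xi$ lies in $W$, this gives
\begin{equation*}
    \pi(x) P_W \xi = f(x) P_W \xi \quad \text{for all } \xi \in V.
\end{equation*}
Applying $P_W$ on the left, and using $P_W^2 = P_W$, yields
\begin{equation*}
    P_W \pi(x) P_W \xi = f(x) P_W \xi \quad \text{for all } \xi \in V,
\end{equation*}
so $P_W \pi(x) P_W = f(x) P_W$ as operators on $V$.

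Since by hypothesis $f$ takes values in $\bbT$, we have $f(x) \in \bbT$, and therefore $P_W \pi(x) P_W \in \bbT P_W$. By the definition of $S_{(G,\pi)}(W)$ in \cref{def:invariants and stabilizers}, this means $x \in S_{(G,\pi)}(W)$. Since $x \in H$ was arbitrary, $H \subset S_{(G,\pi)}(W)$.

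There is really no hard step here; the only subtle point is noticing that the statement is slightly stronger than the analogous one with $\bbC P_W$, because the target is $\bbT P_W$. This stronger conclusion works out for free since $f$ is required to map into $\bbT$ by the definition of a weak stabilizer code. No further input (e.g. \cref{prop:H f is sigma projective} or \cref{thm:nonzero Stabilizer codes}) is needed for this lemma.
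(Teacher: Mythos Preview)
Your proof is correct and is essentially the same as the paper's, just with the computation $P_W\pi(x)P_W = f(x)P_W$ spelled out in more detail; the paper states this identity in a single line and concludes immediately.
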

\begin{proof}
    For every $x \in H$ we have that $P_W\pi(x)P_W = f(x)P_W$, hence $x \in S_{(G,\pi)}(W)$.
\end{proof}

\begin{proposition}\label{prop:stabilizer codes of whole stabilizer group}
    Let $V$ be a Hilbert space and $(G,\pi) \in \PEM_V$. If $W = V^{St}(G,\pi,H,f)$ is a weak stabilizer code, then there exists a function $\tilde{f} \colon S_{(G,\pi)}(W) \to \bbT$ such that $W = V^{St}(G,\pi,S_{(G,\pi)}(W),\tilde{f})$.
\end{proposition}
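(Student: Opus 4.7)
The plan is to define $\tilde f$ using exactly the scalars provided by the membership condition of $S_{(G,\pi)}(W)$, and then verify both inclusions $W \subset V^{St}(G,\pi,S_{(G,\pi)}(W),\tilde f)$ and the reverse. The main obstacle will be the first inclusion, which a priori only gives information about $P_W \pi(x) P_W$ and not about $\pi(x)\xi$ itself for $\xi \in W$; a unitarity/Pythagoras argument bridges the gap.

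\textbf{Constructing $\tilde f$.} Write $S := S_{(G,\pi)}(W)$. By \cref{def:invariants and stabilizers}, for every $x \in S$ there exists a unique scalar $\tilde f(x) \in \bbT$ satisfying
\begin{equation*}
    P_W \pi(x) P_W = \tilde f(x) P_W.
\end{equation*}
This defines a function $\tilde f \colon S \to \bbT$.

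\textbf{Forward inclusion.} The key step is to upgrade the compression identity above to a pointwise eigenvalue equation on $W$. Fix $\xi \in W$ and $x \in S$. Since $P_W \xi = \xi$, the relation $P_W \pi(x) P_W = \tilde f(x) P_W$ yields $P_W \pi(x) \xi = \tilde f(x) \xi$, whence $\|P_W \pi(x)\xi\| = \|\xi\|$. Decomposing orthogonally as $\pi(x)\xi = P_W \pi(x)\xi + (I - P_W) \pi(x) \xi$ and invoking unitarity of $\pi(x)$, the Pythagorean identity forces $\|(I - P_W)\pi(x)\xi\| = 0$. Therefore $\pi(x)\xi = \tilde f(x) \xi$, which shows $\xi \in V^{St}(G,\pi,S,\tilde f)$.

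\textbf{Reverse inclusion.} By \cref{lemma:stabilizers of stabilizer code} we have $H \subset S$. I would next observe that $\tilde f|_H = f$: for any nonzero $\xi \in W$ and $x \in H$, both $\pi(x)\xi = f(x)\xi$ (by definition of $W$) and $\pi(x)\xi = \tilde f(x)\xi$ (by the forward inclusion just proven) hold, so the two scalars coincide. Consequently, any $\eta$ satisfying $\pi(x)\eta = \tilde f(x)\eta$ for all $x \in S$ in particular satisfies $\pi(x)\eta = f(x)\eta$ for all $x \in H$, so $\eta \in W$. Combining the two inclusions gives $W = V^{St}(G,\pi,S_{(G,\pi)}(W),\tilde f)$, as required.
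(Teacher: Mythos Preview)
Your proof is correct and follows essentially the same approach as the paper: define $\tilde f$ via the compression identity, then verify both inclusions using $H \subset S_{(G,\pi)}(W)$ from \cref{lemma:stabilizers of stabilizer code}. The paper is terser---it asserts the forward inclusion ``by the definition of $S_{(G,\pi)}(W)$'' (implicitly relying on the earlier remark that $S_{(G,\pi)}(W) \subset L_{(G,\pi)}(W)$, which itself rests on exactly the unitarity/Pythagoras argument you spell out) and does not explicitly check $\tilde f|_H = f$---so your version is a slightly more detailed rendering of the same argument.
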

\begin{proof}
    For $x \in S_{(G,\pi)}(W)$ define $\tilde{f}(x)$ to be the unique scalar such that $P_W\pi(x)P_W = \tilde{f}(x)P_W$. By \cref{lemma:stabilizers of stabilizer code} we have that $H \subset S_{(G,\pi)}(W)$, hence
    \begin{equation*}
        V^{St}(G,\pi,S_{(G,\pi)}(W),\tilde{f}) \subset V^{St}(G,\pi,H,f) = W.
    \end{equation*}
    On the other hand, by the definition of $S_{(G,\pi)}(W)$, we have that
    \begin{equation*}
        W \subset V^{St}(G,\pi,S_{(G,\pi)}(W),\tilde{f}),
    \end{equation*}
    hence they are equal, completing the proof.
\end{proof}

\cref{thm:detectable errors:Clifford code} can be used to determine if a weak stabilizer code is also a Clifford code, as this result gives strict criteria on the form of the groups $L_{(G,\pi)}(W)$ and $S_{(G,\pi)}(W)$. Similarly, \cref{prop:stabilizer codes of whole stabilizer group} is helpful to determine if a Clifford code is also a weak stabilizer code, as a Clifford code $W = V^{Cl}(G,\pi,L,\rho)$ need not be equal to $V^{St}(G,\pi,S_{(G,\pi)}(W),\tilde{f})$.

Furthermore, \cref{lemma:stabilizers of stabilizer code} gives an answer to the second of our original questions: given some noise we want to protect against, what is the largest code space that can detect this given noise? If $W = V^{St}(G,\pi,H,f)$ is a weak stabilizer code, then $H \subset S_{(G,\pi)}(W) \subset D_{(G,\pi)}(W)$, hence the code space $W$ can detect all the noise that is being modeled by the subgroup $H$ of $G$. It is also clear from the definition of a weak stabilizer code that $W$ is the largest code space that can detect this noise.

\section{Weak stabilizer Clifford codes}\label{sec:weak stabilizer Clifford codes}

In this section we give a complete characterization of when a Clifford code is also a weak stabilizer code for projective error models $(G,\pi) \in \PEM_V$ where $|G| = (\dim V)^2$, i.e., $G$ is a group of central type. This generalizes \cite[Theorem 3]{KlappeneckerRötteler04} and \cite[Corollary 4.2]{GrasslMartínezNicolás10}. We need the following elementary fact about the characters of groups of central type. This fact can be established from results found within \cite{CeccheriniSilbersteinTullio22} and \cite{Cheng15}, for the ease of the reader we choose to present a proof.

\begin{lemma}\label{lemma:character of group of central type}
    Let $G$ be a group of central type, and $\pi \colon G \to U(V)$ be an irreducible projective representation such that $|G| = (\dim V)^2$. Then we have that $\chi_{\pi}(x) = 0$ if and only if $x \neq 1_G$.
\end{lemma}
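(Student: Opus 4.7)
The plan is to exploit the orthogonality relation $\langle \chi_\pi, \chi_\pi \rangle = 1$ (coming from \cref{eqn:inner porduct of characters} applied to the irreducible $\pi$) together with the numerical coincidence $|G| = (\dim V)^2$ to force the character to be supported on $\{1_G\}$.

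First I would handle the value at the identity. From $\pi(1_G)\pi(1_G) = \sigma(1_G,1_G)\pi(1_G)$ and the fact that $\pi(1_G)$ is unitary (in particular invertible), one obtains $\pi(1_G) = \sigma(1_G,1_G) \cdot 1_{U(V)}$ with $|\sigma(1_G,1_G)| = 1$. Hence $|\chi_\pi(1_G)| = \dim V$, and in particular $\chi_\pi(1_G) \neq 0$, which settles the easy direction of the biconditional.

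Next I would unpack $\langle \chi_\pi, \chi_\pi \rangle = 1$ using the definition of the inner product from \cref{eqn:inner product}: this gives
\begin{equation*}
    \sum_{x \in G} |\chi_\pi(x)|^2 = |G|.
\end{equation*}
Substituting $|\chi_\pi(1_G)|^2 = (\dim V)^2 = |G|$ and isolating the contribution from $x = 1_G$ yields
\begin{equation*}
    \sum_{x \in G, \, x \neq 1_G} |\chi_\pi(x)|^2 = 0.
\end{equation*}
Since each summand is nonnegative, every term must vanish, so $\chi_\pi(x) = 0$ for all $x \neq 1_G$. Combined with the previous paragraph, this gives the stated equivalence.

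There is essentially no obstacle here; the only thing to double-check is that the character inner product formula from \cref{eqn:inner porduct of characters} applies to projective characters exactly as in the linear case (it does, as cited), and that $|\chi_\pi(1_G)| = \dim V$ holds for any $\sigma$-projective representation regardless of normalization of $\sigma$, which is the small preliminary computation above.
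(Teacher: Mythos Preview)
Your proof is correct and follows essentially the same approach as the paper: both use $\langle \chi_\pi, \chi_\pi \rangle = 1$ together with $|\chi_\pi(1_G)|^2 = (\dim V)^2 = |G|$ to force the remaining nonnegative sum $\sum_{x \neq 1_G} |\chi_\pi(x)|^2$ to vanish. Your treatment is in fact slightly more careful in that you justify $|\chi_\pi(1_G)| = \dim V$ without assuming $\pi(1_G) = 1_{U(V)}$, which the paper takes for granted.
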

\begin{proof}
    By Schur's Lemma and \cref{eqn:inner porduct of characters} we have that
    \begin{align*}
        1
        & = \dim \Hom_G(\pi, \pi) \\
        & = \langle \chi_\pi, \chi_\pi \rangle \\
        & = \frac{1}{|G|}\sum_{x \in G}\overline{\chi_\pi(x)}\chi_\pi(x) \\
        & = \frac{\overline{\chi_\pi(1_G)}\chi_\pi(1_G)}{|G|} + \frac{1}{|G|}\sum_{x \in G \setminus \{1_G\}}\overline{\chi_\pi(x)}\chi_\pi(x) \\
        & = \frac{(\dim V)^2}{|G|} + \frac{1}{|G|}\sum_{x \in G \setminus \{1_G\}}\overline{\chi_\pi(x)}\chi_\pi(x) \\
        & = 1 + \frac{1}{|G|}\sum_{x \in G \setminus \{1_G\}}\overline{\chi_\pi(x)}\chi_\pi(x).
    \end{align*}
    Hence, $\chi_\pi(x) = 0$ for all $x \in G \setminus \{1_G\}$, completing the proof.
\end{proof}

\begin{lemma}\label{lemma:unique stabilizer of weak stabilizer Clifford code}
    Let $V$ be a Hilbert space and $(G,\pi) \in \PEM_V$ with $|G| = (\dim V)^2$. If $W \subset V$ is a weak stabilizer code, and $W = V^{St}(G,\pi,H,f)$ for some function $f \colon H \to \bbT$, then $H = S_{(G,\pi)}(W)$, and
    \begin{equation*}
        \dim W = \frac{1}{|S_{(G,\pi)}(W)|} \dim V.
    \end{equation*}
\end{lemma}
\begin{proof}
    By \cref{prop:dim of weak stabilizer code}, and \cref{lemma:character of group of central type} we have that
    \begin{equation*}
        \dim W 
        = \frac{1}{|H|}\sum_{x \in H} \overline{f(x)}\chi_{\pi}(x)
        = \frac{1}{|H|} \dim V.
    \end{equation*}
    By \cref{lemma:stabilizers of stabilizer code} we have that $H \subset S_{(G,\pi)}(W)$, and by \cref{prop:stabilizer codes of whole stabilizer group} there exists a function $\tilde{f} \colon S_{(G,\pi)}(W) \to \bbT$ such that $W = V^{St}(G,\pi,S_{(G,\pi)}(W),\tilde{f})$. Hence, we can also compute that
    \begin{equation*}
        \dim W = \frac{1}{|S_{(G,\pi)}(W)|} \dim V.
    \end{equation*}
    Thus, $H = S_{(G,\pi)}(W)$, completing the proof.
\end{proof}

\begin{theorem}\label{thm:weak stabilizer Clifford code}
    Let $V$ be a Hilbert space and $(G,\pi) \in \PEM_V$ with $|G| = (\dim V)^2$. If $W \subset V$ is a Clifford code, then $W$ is a weak stabilizer code if and only if $|G| = |L_{(G,\pi)}(W)|\cdot|S_{(G,\pi)}(W)|$.
\end{theorem}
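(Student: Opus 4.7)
The plan is to reduce the claim to a dimension equality and then evaluate that dimension using the projective character theory of central-type groups. Write $W = V^{Cl}(G,\pi,I,\rho)$, put $S := S_{(G,\pi)}(W)$ and $L := L_{(G,\pi)}(W)$, and define $\tilde{f} \colon S \to \bbT$ by $P_W\pi(x)P_W = \tilde{f}(x)P_W$; this is well-defined by the very definition of $S$. Combining \cref{prop:stabilizer codes of whole stabilizer group} with the immediate observation that $V^{St}(G,\pi,S,\tilde{f})$ is automatically a weak stabilizer code, we obtain that $W$ is a weak stabilizer code if and only if $W = V^{St}(G,\pi,S,\tilde{f})$. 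Since the inclusion $W \subset V^{St}(G,\pi,S,\tilde{f})$ is built into the construction of $\tilde{f}$, this further reduces to the dimension equality $\dim W = \dim V^{St}(G,\pi,S,\tilde{f})$.

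The main step is evaluating the right-hand side. By \cref{prop:dim of weak stabilizer code} it equals $\dim\Hom_S(\tilde{f},\Res^G_S\pi)$, and by \cref{eqn:inner porduct of characters} this intertwiner space has dimension
\begin{equation*}
    \langle \chi_{\tilde{f}}, \chi_{\Res^G_S\pi} \rangle = \frac{1}{|S|}\sum_{x\in S}\tilde{f}(x)\overline{\chi_\pi(x)}.
\end{equation*}
Here the central-type hypothesis $|G| = (\dim V)^2$ enters crucially: by \cref{lemma:character of group of central type} we have $\chi_\pi(x) = 0$ for every $x \neq 1_G$, so only the $x = 1_G$ term contributes, and using $\tilde{f}(1_G) = 1$ together with $\chi_\pi(1_G) = \dim V$ we obtain
\begin{equation*}
    \dim V^{St}(G,\pi,S,\tilde{f}) = \frac{\dim V}{|S|}.
\end{equation*}

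To conclude, \cref{cor:order of inertia group for Clifford codes} gives $|L| = (\dim W/\dim V)\cdot|G|$, so $|L|\cdot|S| = |G|$ is equivalent to $|S| = \dim V/\dim W$, i.e., $\dim W = \dim V/|S| = \dim V^{St}(G,\pi,S,\tilde{f})$, which as noted above is exactly the condition for $W$ to be a weak stabilizer code. The only substantive step is the character computation, whose essential ingredient is \cref{lemma:character of group of central type}; outside the central-type regime $\chi_\pi$ need not vanish off $1_G$, and the clean formula for $\dim V^{St}(G,\pi,S,\tilde{f})$ would fail, so I expect this lemma to be the one that cannot be bypassed in the argument.
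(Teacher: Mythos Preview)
Your proof is correct and uses the same essential ingredients as the paper's: \cref{lemma:character of group of central type}, \cref{prop:dim of weak stabilizer code}, \cref{cor:order of inertia group for Clifford codes}, and \cref{prop:stabilizer codes of whole stabilizer group}. The only difference is organizational: you invoke \cref{prop:stabilizer codes of whole stabilizer group} at the outset to reduce both directions simultaneously to the single dimension equality $\dim W = \dim V/|S|$, whereas the paper treats the two implications separately, starting the forward direction from an arbitrary subgroup $H$ with $W = V^{St}(G,\pi,H,f)$ and only afterwards identifying $H$ with $S_{(G,\pi)}(W)$. The paper's route thereby extracts, as a byproduct, the uniqueness statement recorded as \cref{prop:unique stabilizer of weak stabilizer Clifford code}; your streamlined argument does not pass through this, though it could easily be recovered by the same computation applied to $H$ in place of $S$.
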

\begin{proof}
    Suppose that $W$ is a weak stabilizer code. Let $H$ be a subgroup of $G$ and $f \colon H \to \bbT$ be a function such that $W = V^{St}(G,\pi,H,f)$. By \cref{cor:order of inertia group for Clifford codes} we have that
    \begin{equation*}
        \dim W = \frac{|L_{(G,\pi)}(W)|}{|G|} \dim V.
    \end{equation*}
    By \cref{lemma:unique stabilizer of weak stabilizer Clifford code} we have that
    \begin{equation*}
        \dim W = \frac{1}{|S_{(G,\pi)}|} \dim V.
    \end{equation*}
    By comparing these two equations for $\dim W$ we get that $|G| = |L_{(G,\pi)}(W)|\cdot|S_{(G,\pi)}|$.

    To prove the converse statement, suppose that $|G| = |L_{(G,\pi)}(W)|\cdot|S_{(G,\pi)}(W)|$. By definition of $S_{(G,\pi)}(W)$ there exist a function $f \colon S_{(G,\pi)}(W) \to \bbT$ such that
    \begin{equation*}
        W \subset W' := V^{St}(G,\pi,S_{(G,\pi)}(W),f).
    \end{equation*}
    By \cref{cor:order of inertia group for Clifford codes} and \cref{lemma:unique stabilizer of weak stabilizer Clifford code}
    \begin{equation*}
        \dim W
        = \frac{|L_{(G,\pi)}(W)|}{|G|} \dim V
        = \frac{1}{|S_{(G,\pi)}(W)|} \dim V
        = \dim W'.
    \end{equation*}
    Hence, $W = W'$, completing the proof.
\end{proof}

We choose to highlight the following fact which was shown in the proof above.

\begin{proposition}\label{prop:failure of being a weak stabilizer code}
    Let $V$ be a Hilbert space and $(G,\pi) \in \PEM_V$ with $|G| = (\dim V)^2$. If $W \subset V$ is a non-weak stabilizer Clifford code, then
    \begin{equation*}
        |L_{(G,\pi)}(W)| \cdot |S_{(G,\pi)}(W)| < |G|.
    \end{equation*}
\end{proposition}

From \cref{thm:weak stabilizer Clifford code} we immediately get the following result.

\begin{corollary}\label{cor:stabilizer Clifford code}
    Let $V$ be a Hilbert space and $(G,\pi) \in \PEM_V$ with $|G| = (\dim V)^2$. If $W \subset V$ is a Clifford code, then $W$ is a stabilizer code if and only if $S_{(G,\pi)}(W)$ is a normal subgroup of $G$ and $|G| = |L_{(G,\pi)}(W)|\cdot|S_{(G,\pi)}(W)|$.
\end{corollary}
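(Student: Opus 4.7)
My plan is to derive this corollary as a direct consequence of \cref{thm:weak stabilizer Clifford code} together with \cref{prop:unique stabilizer of weak stabilizer Clifford code} and the observation that stabilizer codes are precisely weak stabilizer codes whose stabilizing subgroup is normal. The two implications run essentially in parallel.

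For the forward direction, assume $W$ is a stabilizer code. Then by definition there exists a normal subgroup $N$ of $G$ and a function $f \colon N \to \bbT$ with $W = V^{St}(G,\pi,N,f)$. In particular, $W$ is a weak stabilizer code, so \cref{thm:weak stabilizer Clifford code} immediately yields $|G| = |L_{(G,\pi)}(W)| \cdot |S_{(G,\pi)}(W)|$. It remains to identify $S_{(G,\pi)}(W)$ with $N$ so that normality transfers. This is exactly the content of \cref{prop:unique stabilizer of weak stabilizer Clifford code}: under the hypotheses here ($W$ is a Clifford code, $W$ is a weak stabilizer code, $|G| = (\dim V)^2$), any subgroup $H$ with $W = V^{St}(G,\pi,H,f)$ must satisfy $H = S_{(G,\pi)}(W)$. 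Applying this to $N$ gives $N = S_{(G,\pi)}(W)$, and since $N$ is normal, so is $S_{(G,\pi)}(W)$.

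For the converse, assume $S_{(G,\pi)}(W)$ is normal in $G$ and $|G| = |L_{(G,\pi)}(W)| \cdot |S_{(G,\pi)}(W)|$. The cardinality condition, combined with \cref{thm:weak stabilizer Clifford code}, gives that $W$ is a weak stabilizer code. Moreover, \cref{prop:stabilizer codes of whole stabilizer group} provides a function $\tilde f \colon S_{(G,\pi)}(W) \to \bbT$ with $W = V^{St}(G,\pi, S_{(G,\pi)}(W), \tilde f)$. Since $S_{(G,\pi)}(W)$ is normal by assumption, this exhibits $W$ as a stabilizer code in the sense of \cref{def:stabilizer code}.

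I do not expect any real obstacle: all of the nontrivial content (the cardinality equality, the uniqueness of the stabilizing subgroup, and the fact that the stabilizer subgroup itself exhibits the code) has already been established in the preceding results, and the corollary is just the observation that normality of the distinguished subgroup $S_{(G,\pi)}(W)$ is exactly the extra ingredient needed to upgrade a weak stabilizer code to a stabilizer code.
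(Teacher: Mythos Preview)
Your proof is correct and follows essentially the same approach as the paper: both directions rest on \cref{thm:weak stabilizer Clifford code} together with the uniqueness statement \cref{prop:unique stabilizer of weak stabilizer Clifford code}, with normality of $S_{(G,\pi)}(W)$ being the distinguishing feature between weak stabilizer and stabilizer codes. The only cosmetic difference is that in the converse you invoke \cref{prop:stabilizer codes of whole stabilizer group} to produce the function $\tilde f$, whereas the paper extracts the same conclusion directly from \cref{prop:unique stabilizer of weak stabilizer Clifford code}.
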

\begin{proof}
    By \cref{thm:weak stabilizer Clifford code} we have that $W$ is a weak stabilizer code if and only if
    \begin{equation*}
        |G| = |L_{(G,\pi)}(W)|\cdot|S_{(G,\pi)}(W)|.
    \end{equation*}
    In this case, we have by \cref{lemma:unique stabilizer of weak stabilizer Clifford code} that $S_{(G,\pi)}(W)$ is the only subgroup of $G$ for which there exists a function $f \colon S_{(G,\pi)}(W) \to \bbT$ such that
    \begin{equation*}
        W = V^{St}(G,\pi,S_{(G,\pi)}(W),f).
    \end{equation*}
    Hence, $W$ is a stabilizer code if and only if $S_{(G,\pi)}(W)$ is a normal subgroup of $G$. This completes the proof.
\end{proof}

\section{Examples of non-stabilizer Clifford codes}\label{sec:non-stabilizer Clifford codes}

In \cite[Section 10.9]{KlappeneckerRötteler02} the first and smallest example of a non-stabilizer Clifford code is constructed. Since the authors mainly were concerned with projective error models from groups of central type, or nice error bases, it is not mentioned that this example fits into an infinite family of similar examples. Inserting $n = 2$ in the next example retrieves the well known example by Klappenecker and R{\"o}tteler.

\begin{proposition}\label{prop:smallest non-stabilizer Clifford code}
    Let $n \geq 2$ be an integer and consider the group $C_2 \times D_{2n}$ with the following presentation:
    \begin{equation*}
        C_2 \times D_{2n} = \langle a, b, c \, \vert \, a^{2n} = b^2 = c^2 = [a,c] = [b,c] = 1, \, bab = a^{-1} \rangle.
    \end{equation*}
    Let $X$ and $P$ be the matrices as in \cref{prop:projective XP-error model}, and $\rho \colon D_{2n} \to U(\bbC^2)$ be the corresponding projective representation. Define $\pi \colon C_2 \times D_{2n} \to U(\bbC^4)$ by
    \begin{equation*}
        \pi(c^kb^la^m) =
        \begin{bmatrix}
            0 & I_2 \\
            I_2 & 0
        \end{bmatrix}^k
        \begin{bmatrix}
            X & 0 \\
            0 & X
        \end{bmatrix}^l
        \begin{bmatrix}
            P & 0 \\
            0 & -P
        \end{bmatrix}^m.
    \end{equation*}
    Then $\pi$ is a projectively faithful irreducible projective representation of $C_2 \times D_{2n}$ on $\bbC^4$. Moreover, $\Ind_{D_{2n}}^{C_2 \times D_{2n}} \rho \simeq \pi$, and $W := V^{Cl}(C_2 \times D_{2n}, \pi, D_{2n}, \rho)$ is a non-stabilizer Clifford code.
\end{proposition}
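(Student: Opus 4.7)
The plan is to verify the three assertions in sequence: that $\pi$ is a projective representation and is projectively faithful, that $\Ind_{D_{2n}}^{C_2 \times D_{2n}}\rho \simeq \pi$ (which simultaneously delivers the irreducibility of $\pi$), and that $W$ is a Clifford code that fails to be a stabilizer code.

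For the first part, I verify that the block matrices $\pi(a), \pi(b), \pi(c)$ satisfy the defining relations of $C_2 \times D_{2n}$ up to scalar multiples of $I_4$. The nontrivial computations use $XPX = \zeta_{2n} P^{-1}$ from \cref{prop:projective XP-error model} to obtain $\pi(b)\pi(a)\pi(b) = \zeta_{2n}\pi(a)^{-1}$, and a direct block computation giving the scalar identity $\pi(c)\pi(a) = -\pi(a)\pi(c)$; the remaining relations hold on the nose. Projective faithfulness is then a short matrix inspection: for $\pi(c^k b^l a^m)$ to be scalar, the off-diagonal block structure of $\pi(c)$ forces $k = 0$, then $X^l$ being scalar forces $l = 0$, and matching the two diagonal blocks $P^m$ and $(-1)^m P^m$ forces $m = 0$.

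Next, to obtain $\Ind_{D_{2n}}^{C_2 \times D_{2n}}\rho \simeq \pi$, I observe that $\Res^{C_2 \times D_{2n}}_{D_{2n}}\pi$ is block diagonal and identify the first block with $\rho$ and the second block with $\rho^c$, the latter identification coming from the anticommutation $\pi(c)\pi(a) = -\pi(a)\pi(c)$ and the commutation $\pi(c)\pi(b) = \pi(b)\pi(c)$, which produce the cocycle ratios appearing in the conjugate-representation formula. A character comparison at $a$ yields $\chi_\rho(a) = 1 + \zeta_{2n} \neq -(1 + \zeta_{2n}) = \chi_{\rho^c}(a)$ for $n \geq 2$, so $\rho \not\simeq \rho^c$. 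Hence the inertia group of $\rho$ in $C_2 \times D_{2n}$ is $D_{2n}$, and Clifford correspondence gives that $\Ind_{D_{2n}}^{C_2 \times D_{2n}}\rho$ is irreducible of dimension $4$. Frobenius reciprocity then yields a nonzero intertwiner $\Ind_{D_{2n}}^{C_2 \times D_{2n}}\rho \to \pi$, which must be an isomorphism by the dimension count; this also establishes irreducibility of $\pi$.

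For the last assertion, $W$ is a Clifford code by construction via \cref{prop:abstract Clifford code}. To show it is not a stabilizer code, I apply \cref{thm:detectable errors:Clifford code} to obtain $S_{(G,\pi)}(W) = \ker(q \circ \rho)$, which is trivial since $\rho$ is projectively faithful by \cref{prop:projective XP-error model}. If $W = V^{St}(G,\pi,N,f)$ held for some normal subgroup $N$, then \cref{lemma:stabilizers of stabilizer code} would force $N \subseteq S_{(G,\pi)}(W) = \{1\}$, so $W = V^{St}(G,\pi,\{1\},f) = V$, contradicting $\dim W = 2 < 4 = \dim V$. The main technical obstacle is the careful identification of the second diagonal block of $\Res^{C_2 \times D_{2n}}_{D_{2n}}\pi$ with $\rho^c$ through the cocycle-theoretic formula; once that identification is in hand, the rest is a streamlined application of Clifford theory together with \cref{thm:detectable errors:Clifford code}.
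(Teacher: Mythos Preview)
Your argument is correct, and it takes a somewhat different route from the paper's proof for the middle portion. The paper establishes irreducibility of $\pi$ directly by checking that $\Span\{\pi(x):x\in G\}=M_4(\bbC)$, and then exhibits $\Ind_{D_{2n}}^{G}\rho\simeq\pi$ by writing down the explicit intertwiner $T(x\otimes(z_1,z_2))=\pi(x)(z_1,z_2,0,0)$; this simultaneously pins down $W$ as the first two coordinates of $\bbC^4$. You instead read off $\Res^G_{D_{2n}}\pi\simeq\rho\oplus\rho^c$ from the block structure, distinguish $\rho$ and $\rho^c$ by their characters at $a$, deduce $I_G(\rho)=D_{2n}$, and then use Mackey/Clifford theory plus Frobenius reciprocity and a dimension count to get both $\Ind_{D_{2n}}^{G}\rho\simeq\pi$ and the irreducibility of $\pi$ at once. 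Your approach is more structural and makes transparent \emph{why} the induction is irreducible (the inertia group collapses to $D_{2n}$), which is exactly the mechanism behind the similar families in \cref{prop:family of non-stabilizer Clifford codes}; the paper's approach is more hands-on and immediately locates $W$ inside $\bbC^4$. One terminological nitpick: the statement ``$I_G(\rho)=D_{2n}$ implies $\Ind_{D_{2n}}^{G}\rho$ is irreducible'' is Mackey's irreducibility criterion rather than the form of Clifford correspondence recorded in \cref{sec:Preliminaries}, so you may want to cite it as such. For the non-stabilizer conclusion, your use of \cref{lemma:stabilizers of stabilizer code} and the paper's use of \cref{prop:stabilizer codes of whole stabilizer group} amount to the same argument; note that your reasoning in fact rules out $W$ being even a weak stabilizer code, since nothing in it uses normality of $N$.
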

\begin{proof}
    As in the proof of \cref{prop:projective XP-error model} one computes that $\pi$ is indeed a projective representation of $C_2 \times D_{2n}$ on $\bbC^4$. To see that $\pi$ is irreducible one confirms that
    \begin{equation*}
        \Span\{\pi(x) \colon x \in C_2 \times D_{2n}\} = M_4(\bbC).
    \end{equation*}
    To see that $\pi$ is projectively faithful, notice that $\pi(c^kb^la^m)$ is a diagonal matrix if and only if $k = l = 0$. In this case it is easy to see that $\pi(c^kb^la^m)$ is a scalar multiple of the identity if and only if $m = 0$ as well, hence $\pi$ is projectively faithful.

    To see that $\Ind_{D_{2n}}^{C_2 \times D_{2n}}\rho \simeq \pi$ define the linear map $T \colon \Ind_{D_{2n}}^{C_2 \times D_{2n}}\bbC^2 \to \bbC^4$, by
    \begin{equation*}
        T(x \otimes (z_1,z_2)) = \pi(x)(z_1,z_2,0,0).
    \end{equation*}
    This is an intertwiner, and we notice that the Clifford code $W$ corresponds to the first two coordinates in $\bbC^4$.

    By \cref{prop:projective XP-error model} we know that $\rho$ is projectively faithful. Hence, by \cref{thm:detectable errors:Clifford code} we get that
    \begin{equation*}
        S_{(C_2 \times D_{2n}, \pi)}(W) = \ker(q \circ \rho) = \{1\}.
    \end{equation*}
    Then
    \begin{equation*}
        V^{St}(C_2 \times D_{2n}, \pi,S_{(C_2 \times D_{2n}, \pi)}(W),1) = \bbC^4.
    \end{equation*}
    By \cref{prop:stabilizer codes of whole stabilizer group} this implies that $W$ is not a weak stabilizer code, completing the proof.
\end{proof}

The next family of examples is inspired by the above example by Klappenecker and R{\"o}tteler, as well as the work by Ginosar and Schnabel in \cite[Theorem A]{GinosarSchnabel19}, where they present a large family of groups of central type.

\begin{proposition}\label{prop:family of non-stabilizer Clifford codes}
    Let $n \geq 3$ be an odd number. Let $\bbZ_2$ act on $\bbZ_n \times \bbZ_n$ via inversion, and let $L := (\bbZ_n \times \bbZ_n) \rtimes \bbZ_2$ denote the associated semi-direct product. Define $\rho \colon L \to U(\bbC^n)$ by
    \begin{equation*}
        \rho(a,b,c) = X_n^a Z_n^b C^c,
    \end{equation*}
    where $X_n$ and $Z_n$ are as in \cref{prop:projective error models exist} and
    \begin{equation*}
        C :=
        \begin{bmatrix}
            0 & 0 & \cdots & 0 & 1 \\
            0 & 0 & \cdots & 1 & 0 \\
            \vdots & \vdots & \ddots & \vdots & \vdots \\
            0 & 1 & \cdots & 0 & 0 \\
            1 & 0 & \cdots & 0 & 0
        \end{bmatrix}.
    \end{equation*}
    Set $G := L \times \bbZ_2$ and define $\pi \colon G \to U(\bbC^{2n})$ by
    \begin{equation*}
        \pi(a,b,c,d) =
        \begin{bmatrix}
            X_n & 0 \\
            0 & X_n 
        \end{bmatrix}^a
        \begin{bmatrix}
            Z_n & 0 \\
            0 & Z_n 
        \end{bmatrix}^b
        \begin{bmatrix}
            C & 0 \\
            0 & -C 
        \end{bmatrix}^c
        \begin{bmatrix}
            0 & I_n \\
            I_n & 0 
        \end{bmatrix}^d.
    \end{equation*}
    Then both $\rho$ and $\pi$ are projectively faithful irreducible projective representations, and $\Ind_L^G\rho \simeq \pi$. Furthermore, $W := V^{Cl}(G,\pi,L,\rho)$ is a non-stabilizer Clifford code.
\end{proposition}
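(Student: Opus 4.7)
The plan is to follow the blueprint of \cref{prop:smallest non-stabilizer Clifford code}: (i) verify that $\rho$ and $\pi$ are projectively faithful irreducible projective representations, (ii) exhibit $\pi \simeq \Ind_I^G\rho$ via an explicit intertwiner, and (iii) use \cref{thm:detectable errors:Clifford code} together with \cref{prop:stabilizer codes of whole stabilizer group} to rule out $W$ being a weak stabilizer code, which in particular rules out $W$ being a stabilizer code. For $\rho$, the relations $X_n^n = Z_n^n = C^2 = I_n$ are immediate, $Z_nX_n = \zeta_n X_nZ_n$ is from \cref{prop:projective error models exist}, and a direct computation gives $CX_nC = X_n^{-1}$ and $CZ_nC = \zeta_n^{-1}Z_n^{-1}$. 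Modulo scalars these are exactly the defining relations of $I = (\bbZ_n \times \bbZ_n) \rtimes \bbZ_2$, so $\rho$ is a projective representation. Irreducibility is immediate since the subset $\{X_n^aZ_n^b\}_{a,b}$ already spans $M_n(\bbC)$. For projective faithfulness, note that $\rho(a,b,0) = X_n^aZ_n^b$ is a scalar multiple of $I_n$ only when $a=b=0$ (by linear independence of the $X_n^aZ_n^b$), while $\rho(a,b,1) = X_n^aZ_n^bC$ has all of its nonzero entries lying on a single antidiagonal of $M_n(\bbC)$ and so cannot equal a nonzero scalar multiple of $I_n$ when $n \geq 2$.

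For $\pi$, the block-diagonal operators $\pi(a,b,c,0)$ inherit all of the relations of $\rho$ (the sign flip in the lower $C$-block does not disturb them, since $C^2 = I_n$ and $C$ appears an even number of times in each relation), while the flip $F := \pi(0,0,0,1)$ satisfies $F^2 = I_{2n}$, commutes with the $X_n$- and $Z_n$-blocks, and anti-commutes with $\mathrm{diag}(C, -C)$, hence commutes with it modulo scalars. Thus $\pi$ is a projective representation of $G$. Both irreducibility and projective faithfulness of $\pi$ follow at once from linear independence of the $4n^2 = |G|$ matrices $\{\pi(g) : g \in G\}$ in the $4n^2$-dimensional space $M_{2n}(\bbC)$. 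To check this linear independence, split any putative relation according to $d \in \{0,1\}$ (block-diagonal versus block-antidiagonal) and then according to $c \in \{0,1\}$ (symmetric versus antisymmetric combinations of the two blocks); each piece reduces to the known linear independence of $\{X_n^aZ_n^b\}$ and of $\{X_n^aZ_n^bC\}$ in $M_n(\bbC)$.

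To establish $\Ind_I^G\rho \simeq \pi$, let $\iota \colon \bbC^n \to \bbC^{2n}$ be the inclusion into the first summand. A direct check shows $\iota \in \Hom_I(\rho, \Res^G_I\pi)$, so by Frobenius reciprocity it extends to a nonzero intertwiner $T \colon \Ind_I^G\bbC^n \to \bbC^{2n}$; since $\dim \Ind_I^G\bbC^n = [G:I]\cdot n = 2n = \dim \bbC^{2n}$ and $\pi$ is irreducible, $T$ is an isomorphism. In particular $W = V^{Cl}(G,\pi,I,\rho)$ is the image of $\iota$, so $\dim W = n$. Finally, \cref{thm:detectable errors:Clifford code} gives $S_{(G,\pi)}(W) = \ker(q \circ \rho) = \{1_I\}$ by the projective faithfulness of $\rho$, and then \cref{prop:stabilizer codes of whole stabilizer group} forces any hypothetical weak stabilizer realization of $W$ to satisfy $W = V^{St}(G,\pi,\{1_I\},1) = \bbC^{2n}$, contradicting $\dim W = n < 2n$. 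Hence $W$ is not a weak stabilizer code, and in particular not a stabilizer code. The main obstacle is purely organizational: keeping exact track of the sign twist in the $C$-block through all the commutation and linear-independence calculations, since this twist is precisely what yields $S_{(G,\pi)}(W) = \{1_I\}$ and thereby obstructs any stabilizer description of $W$.
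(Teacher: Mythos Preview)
Your proposal is correct and follows essentially the same blueprint as the paper's proof: verify that $\rho$ and $\pi$ are projectively faithful irreducible projective representations, identify $\pi$ with $\Ind_I^G\rho$, and conclude via \cref{thm:detectable errors:Clifford code} and \cref{prop:stabilizer codes of whole stabilizer group} that $S_{(G,\pi)}(W)=\{1\}$ forces $W$ not to be a weak stabilizer code. The only difference is cosmetic: where the paper argues linear independence of $\{\pi(x)\}$ by noting that $\pi(a,b,c,d)$ is diagonal iff $a=c=d=0$ (so scalar iff trivial), you instead split according to $d$ and then $c$; and you spell out the projective faithfulness of $\rho$ on the $c=1$ coset explicitly, which the paper absorbs into its reference to \cref{prop:projective error models exist}.
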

\begin{proof}
    The fact that $\rho$ is a projectively faithful irreducible projective representation of $L$ follows by \cref{prop:projective error models exist}. To see that $\pi$ is a projectively faithful irreducible projective representations of $G$, we claim that the collection $\{ \pi(x) : x \in G \}$ forms a basis for $M_{2n}(\bbC)$. Indeed, if $\pi(x)$ and $\pi(y)$ are linearly dependent, then we get that $\pi(x^{-1}y)$ is a scalar multiple of the identity matrix. Notice that $\pi(a,b,c,d)$ is a diagonal matrix if and only if $a = c = d = 0$. From this observation it becomes clear that $\pi(x^{-1}y)$ is a scalar multiple of the identity if and only if $x^{-1}y = 1_G$. Hence, $x = y$ and thus $\pi(x) = \pi(y)$. This shows that the set $\{ \pi(x) : x \in G \}$ is linearly independent, hence a basis for $M_{2n}(\bbC)$ since $|G| = (2n)^2$.
    
    The fact that $\Ind_L^G\rho \simeq \pi$ follows by essentially the same reasoning as in \cref{prop:smallest non-stabilizer Clifford code}, so does the fact that $W = V^{Cl}(G,\pi,L,\rho)$ is a non-stabilizer Clifford code. This completes the proof.
\end{proof}

\section{Examples of non-Clifford weak stabilizer codes}\label{sec:non-Clifford weak stabilizer codes}

To find examples of non-Clifford weak stabilizer codes we want to use \cref{thm:detectable errors:Clifford code} to its full potential. The following example is an example of a code that is permutation invariant. We thank Ningping Cao for the idea that lead us towards this example.

\begin{proposition}\label{prop:permutation invariant non-Clifford code}
    Let $n \geq 2$ and define $G = (\bbZ_2 \times \bbZ_2)^n \rtimes S_n$ where $S_n$ acts on $(\bbZ_2 \times \bbZ_2)^n$ by permuting the copies of $\bbZ_2 \times \bbZ_2$. Consider the projectively faithful irreducible projective representation $\pi \colon G \to U((\bbC^2)^{\otimes n})$ defined by
    \begin{equation*}
        \pi((a_i,b_i)_{i=1}^{n},\tau) = (X^{a_1}Z^{b_1} \otimes \dots \otimes X^{a_n}Z^{b_n}) \, \tau,
    \end{equation*}
    where $X$ and $Z$ are defined in \cref{ex:Pauli group}. Then $W := V^{St}(G, \pi, S_n, 1)$ is a weak stabilizer code that is not a Clifford code.
\end{proposition}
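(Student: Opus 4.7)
The plan is to first identify $W$ explicitly and then use the structural constraint on Clifford codes provided by \cref{thm:detectable errors:Clifford code}.

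First, I would observe that for $\tau \in S_n$ the operator $\pi(0,\ldots,0,\tau)$ is exactly the permutation of tensor factors, so $W = V^{St}(G,\pi,S_n,1)$ is by definition the symmetric subspace $\mathrm{Sym}^n(\bbC^2) \subset (\bbC^2)^{\otimes n}$. This has dimension $n+1 \geq 3$, so $W$ is a bona fide weak stabilizer code. To rule out that $W$ is a Clifford code, I will use the fact, extracted from \cref{thm:detectable errors:Clifford code} together with \cref{prop:detectable errors in terms of inertia group and stabilizers}, that any Clifford code $W$ satisfies the dichotomy $\pi(x)W \subset W$ or $\pi(x)W \subset W^\perp$ for every $x \in G$. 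Therefore it suffices to exhibit a single $x_0 \in G$ violating this dichotomy.

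For the witness I would take $x_0 = ((1,0),(0,0),\ldots,(0,0),e) \in G$, so that $\pi(x_0) = X \otimes I \otimes \cdots \otimes I$. Apply it to the vector $\ket{0\cdots 0} \in W$, obtaining $\ket{10\cdots 0}$. Since $\ket{10\cdots 0}$ is a computational basis vector that is not invariant under permuting the tensor factors (as $n \geq 2$), it is not symmetric and hence not in $W$. On the other hand, writing the normalized symmetric Hamming-weight-one vector as $\ket{1}_{\mathrm{sym}} = \frac{1}{\sqrt{n}}\sum_{i=1}^n \ket{0\cdots 010\cdots 0} \in W$, the inner product $\braket{1_{\mathrm{sym}}}{10\cdots 0} = \tfrac{1}{\sqrt{n}} \neq 0$ shows that $\ket{10\cdots 0} \notin W^\perp$ either. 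Thus $\pi(x_0)\ket{0\cdots 0}$ lies in neither $W$ nor $W^\perp$, so $\pi(x_0)W$ is contained in neither subspace.

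Combining these two steps yields the conclusion: if $W$ were a Clifford code $V^{Cl}(G,\pi,I,\rho)$, then by the identity $L_{(G,\pi)}(W) = I$ (part of \cref{thm:detectable errors:Clifford code}) and the subsequent discussion, the dichotomy $\pi(x)W \subset W$ or $\pi(x)W \subset W^\perp$ would hold for every $x \in G$, contradicting the computation above. Therefore $W$ is a weak stabilizer code that is not a Clifford code. The only minor point of care is checking that $\ket{10\cdots 0}$ is not accidentally in $W^\perp$; everything else is routine identification of the symmetric subspace and a direct application of the structural theorems already proved in \cref{sec:error correction}.
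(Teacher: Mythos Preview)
Your argument is correct. Identifying $W$ with $\mathrm{Sym}^n(\bbC^2)$ is the same first step the paper takes (phrased there in terms of the Dicke states $\ket{D^n_k}$), and your witness $x_0$ with $\pi(x_0)\ket{0\cdots0}=\ket{10\cdots0}$ genuinely violates the dichotomy $\pi(x)W\subset W$ or $\pi(x)W\subset W^\perp$ that every Clifford code must satisfy (this dichotomy is established inside the proof of \cref{thm:detectable errors:Clifford code}, and is equivalent to \cref{eqn:detectable errors} via \cref{prop:detectable errors in terms of inertia group and stabilizers}).

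The paper, however, proves the proposition by a counting argument rather than by the dichotomy. It invokes \cref{cor:order of inertia group for Clifford codes} to compute that a putative Clifford structure would force $|L_{(G,\pi)}(W)| = 2^n(n+1)!$, and then bounds $|L_{(G,\pi)}(W)|$ above by $2^{n+1}n!$ by observing that any $((a_i,b_i),\tau)\in L_{(G,\pi)}(W)$ must send $\ket{0\cdots0}$ into $W$, forcing $a_1=\cdots=a_n$. Amusingly, the paper then records exactly your computation with $\ket{10\cdots0}$ in the paragraph \emph{after} the proof, as a remark that these codes fail the hypothesis of \cref{prop:detectable errors in terms of inertia group and stabilizers}. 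So you have essentially promoted that remark to a proof. Your route is shorter and more direct; the paper's route is more quantitative, yielding an explicit upper bound on $|L_{(G,\pi)}(W)|$ rather than merely its failure to equal a particular value.
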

\begin{proof}
    Note that $\pi$ is indeed a projectively faithful irreducible projective representation by \cref{prop:permutation product PEM}. Let $\ket{0} = (1,0),\ket{1} = (0,1)$ denote the standard basis vectors on $\bbC^2$. We will use the shorthand notation
    \begin{equation*}
        \ket{j_1 j_2 \dots j_n} := \ket{j_1} \otimes \ket{j_2} \otimes \dots \otimes \ket{j_n},
    \end{equation*}
    with each $j_k \in \{0,1\}$, to denote the standard basis for $(\bbC^2)^{\otimes n}$. For each $k = 0,1,\dots,n$ we define the set
    \begin{equation*}
        B_k = \{ \ket{j_1 j_2 \dots j_n} \colon \sum_{i = 1}^{n} j_i = k \},
    \end{equation*}
    and define the vector
    \begin{equation*}
        \ket{D^n_k} = \sum_{\xi \in B_k} \xi.
    \end{equation*}
    One easily confirms that $W = \Span\{ \ket{D^n_k} \colon k = 0,1,\dots,n \}$, hence $W$ is in fact a weak stabilizer code as it is a nonzero subspace. If $W$ is a Clifford code, we have by \cref{cor:order of inertia group for Clifford codes} that
    \begin{equation*}
        |L_{(G,\pi)}(W)| 
        = \frac{\dim W}{\dim((\bbC^2)^{\otimes n})}|G|
        = 2^n \cdot (n+1)!
    \end{equation*}
    Suppose that $((a_i,b_i)_{i = 1}^{n},\tau) \in L_{(G,\pi)}(W)$. Then we have that
    \begin{equation*}
        \pi((a_i,b_i)_{i = 1}^{n},\tau)\ket{D_0^n}
        = \ket{a_1a_2 \dots a_n}
    \end{equation*}
    must belong to $W$. However, this is the case if and only if $a_1 = a_2 = \dots = a_n$. Thus, we get  that
    \begin{equation*}
        |L_{(G,\pi)}(W)| \leq 2^{n+1} \cdot n! < 2^n \cdot (n+1)!
    \end{equation*}
    whenever $2 < n + 1$. Hence, whenever $n \geq 2$, the order of $L_{(G,\pi)}(W)$ is too small for $W$ to be a Clifford code, completing the proof.
\end{proof}

These examples of non-Clifford weak stabilizer codes are very badly behaved. For example, when $n = 2$ the vector
\begin{equation*}
    (X \otimes I_2) \ket{D^2_0} = \ket{10}
\end{equation*}
is not in the code space or orthogonal to it. Hence, the assumptions in \cref{prop:detectable errors in terms of inertia group and stabilizers} are not satisfied, so we cannot describe the detectable errors in terms of the stabilizers and logical operations on the code.

\section{Product codes}\label{sec:Product codes}

In this section we show that the tensor product of two Clifford codes is again a Clifford code. In the special case where we only consider projective error models $(G,\pi) \in \PEM_V$ where $|G| = (\dim V)^2$, we show that if either of the Clifford codes we started with was not a weak stabilizer code, then the product code is also not a weak stabilizer code.

\begin{proposition}\label{prop:product code}
    Let $V_1$ and $V_2$ be Hilbert spaces, and suppose that $(G_1,\pi_1) \in \PEM_{V_1}$, and $(G_2,\pi_2) \in \PEM_{V_2}$. If $W_1 \subset V_1$, and $W_2 \subset V_2$ are Clifford codes, then $W_1 \otimes W_2$ is a Clifford code in $V_1 \otimes V_2$. Furthermore,
    \begin{align*}
        L_{(G_1 \times G_2, \pi_1 \otimes \pi_2)}(W_1 \otimes W_2) & = L_{(G_1, \pi_1)}(W_1) \times L_{(G_2, \pi_2)}(W_2), \text{ and } \\
        S_{(G_1 \times G_2, \pi_1 \otimes \pi_2)}(W_1 \otimes W_2) & = S_{(G_1, \pi_1)}(W_1) \times S_{(G_2, \pi_2)}(W_2).
    \end{align*}
\end{proposition}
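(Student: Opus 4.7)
The plan is to reduce to working with the data $(I, \rho)$ via \cref{prop:abstract Clifford code is Clifford code}. Since $W_1$ and $W_2$ are Clifford codes, that proposition provides subgroups $I_i \leq G_i$ and irreducible projective representations $\rho_i \colon I_i \to U(V_{\rho_i})$ with $\Ind_{I_i}^{G_i}\rho_i \simeq \pi_i$ and $W_i = V^{Cl}(G_i, \pi_i, I_i, \rho_i)$. First I would verify that $\rho_1 \otimes \rho_2$ is an irreducible projective representation of $I_1 \times I_2$ on $V_{\rho_1} \otimes V_{\rho_2}$, which goes through exactly as in \cref{prop:product error models} (irreducibility alone suffices here, no faithfulness assumption is needed). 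Combined with the external-tensor-product compatibility $\Ind_{I_1 \times I_2}^{G_1 \times G_2}(\rho_1 \otimes \rho_2) \simeq \Ind_{I_1}^{G_1}\rho_1 \otimes \Ind_{I_2}^{G_2}\rho_2$ for induced projective representations, this yields $\Ind_{I_1 \times I_2}^{G_1 \times G_2}(\rho_1 \otimes \rho_2) \simeq \pi_1 \otimes \pi_2$, so that the abstract Clifford code $V^{Cl}(G_1 \times G_2, \pi_1 \otimes \pi_2, I_1 \times I_2, \rho_1 \otimes \rho_2)$ is defined by \cref{prop:abstract Clifford code}.

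Next I would identify this abstract Clifford code with $W_1 \otimes W_2$. Let $T_i \colon V_{\rho_i} \to V_i$ be a nonzero intertwiner realizing $\rho_i \simeq (\Res^{G_i}_{I_i}\pi_i)|_{W_i}$, so that $W_i = T_i(V_{\rho_i})$. Then $T_1 \otimes T_2$ is a nonzero intertwiner from $\rho_1 \otimes \rho_2$ to $\Res^{G_1 \times G_2}_{I_1 \times I_2}(\pi_1 \otimes \pi_2)$ whose image is $W_1 \otimes W_2$. By the uniqueness clause in \cref{prop:abstract Clifford code}, this forces $W_1 \otimes W_2 = V^{Cl}(G_1 \times G_2, \pi_1 \otimes \pi_2, I_1 \times I_2, \rho_1 \otimes \rho_2)$, which is therefore a Clifford code by \cref{prop:abstract Clifford code is Clifford code}.

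The equalities for $L$ and $S$ then follow by applying \cref{thm:detectable errors:Clifford code} to $W_1 \otimes W_2$ and to each $W_i$ individually. For $L$ the theorem gives $L_{(G_1 \times G_2, \pi_1 \otimes \pi_2)}(W_1 \otimes W_2) = I_1 \times I_2$ and $L_{(G_i, \pi_i)}(W_i) = I_i$, which together give the desired equality. For $S$ the theorem gives $S_{(G_1 \times G_2, \pi_1 \otimes \pi_2)}(W_1 \otimes W_2) = \ker(q \circ (\rho_1 \otimes \rho_2))$ and $S_{(G_i, \pi_i)}(W_i) = \ker(q \circ \rho_i)$, so it remains to establish the elementary identity $\ker(q \circ (\rho_1 \otimes \rho_2)) = \ker(q \circ \rho_1) \times \ker(q \circ \rho_2)$. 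This in turn reduces to the linear-algebraic fact that a tensor product $A \otimes B$ of nonzero operators is a scalar multiple of the identity if and only if each of $A$ and $B$ is individually a scalar multiple of the identity.

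The main obstacle I anticipate is verifying the external-tensor-product identity for induced projective representations. For linear representations this is classical, and in the projective setting it should follow from the algebra isomorphism $\bbC[G_1 \times G_2]^{\sigma_1 \times \sigma_2} \simeq \bbC[G_1]^{\sigma_1} \otimes \bbC[G_2]^{\sigma_2}$ together with the realization of induction as a balanced tensor product over the subgroup algebra. Confirming that all the 2-cocycles match up correctly, so that the tensor of intertwiners on the nose produces a bona fide intertwiner of $(\sigma_1 \times \sigma_2)$-projective representations, is the only place where any real care is required; everything else is formal.
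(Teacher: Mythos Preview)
Your proposal is correct and follows essentially the same approach as the paper: extract the data $(I_i,\rho_i)$, establish $\Ind_{I_1\times I_2}^{G_1\times G_2}(\rho_1\otimes\rho_2)\simeq\pi_1\otimes\pi_2$, and then read off $L$ and $S$ via \cref{thm:detectable errors:Clifford code} together with the elementary fact about when a tensor product of operators is scalar. The only notable difference is that the paper handles the induction--tensor compatibility by writing down the explicit intertwiner $(x_1,x_2)\otimes(\xi_1\otimes\xi_2)\mapsto(x_1\otimes\xi_1)\otimes(x_2\otimes\xi_2)$ rather than invoking the twisted group algebra isomorphism, and it is somewhat terser than you are about why the resulting Clifford code is exactly $W_1\otimes W_2$; your $T_1\otimes T_2$ argument makes that identification cleaner.
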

\begin{proof}
    By \cref{prop:product error models} we have that $(G_1 \times G_2, \pi_1 \otimes \pi_2) \in \PEM_{V_1 \otimes V_2}$, hence the above statements are well-defined. Write $W_1 = V^{Cl}(G_1,\pi_1,L_1,\rho_1)$ and $W_2 = V^{Cl}(G_2,\pi_2,L_2,\rho_2)$. To confirm that $W_1 \otimes W_2$ is a Clifford code we want to check that
    \begin{equation*}
        \Ind_{L_1 \times L_2}^{G_1 \times G_2}(\rho_1 \otimes \rho_2) \simeq (\Ind_{L_1}^{G_1}\rho_1) \otimes (\Ind_{L_2}^{G_2}\rho_2).
    \end{equation*}
    Since $\Ind_{L_1}^{G_1}\rho_1 \simeq \pi_1$ and $\Ind_{L_2}^{G_2}\rho_2 \simeq \pi_2$, this would prove the desired isomorphism. Define $T \colon \Ind_{L_1 \times L_2}^{G_1 \times G_2}(W_1 \otimes W_2) \to (\Ind_{L_1}^{G_1}(W_1)) \otimes (\Ind_{L_2}^{G_2}(W_2))$ on simple tensors by
    \begin{equation*}
        T((x_1,x_2)\otimes (\xi_1 \otimes \xi_2))
        = (x_1 \otimes \xi_1) \otimes (x_2 \otimes \xi_2).
    \end{equation*}
    This is clearly an injective intertwiner. Since $T$ is an injective intertwiner between finite dimensional vector spaces of the same dimension, we get that $T$ is an isomorphism. This shows that $W_1 \otimes W_2$ is a Clifford code, and that
    \begin{equation*}
        L_{(G_1 \times G_2, \pi_1 \otimes \pi_2)}(W_1 \otimes W_2) = L_1 \times L_2 = L_{(G_1, \pi_1)}(W_1) \times L_{(G_2, \pi_2)}(W_2)
    \end{equation*}
    by \cref{thm:detectable errors:Clifford code} and \cref{cor:Clifford codes has unique inertia group}. Similarly, we get that
    \begin{equation*}
        S_{(G_1 \times G_2, \pi_1 \otimes \pi_2)}(W_1 \otimes W_2) = \ker(q\circ(\rho_1 \otimes \rho_2)) = S_{(G_1, \pi_1)}(W_1) \times S_{(G_2, \pi_2)}(W_2),
    \end{equation*}
    since $\rho_1(x_1) \otimes \rho_2(x_2)$ is a scalar multiple of the identity if and only if both $\rho_1(x_1)$ and $\rho_2(x_2)$ are scalar multiples of the identity. This completes the proof.
\end{proof}

\begin{corollary}\label{cor:non-stabilizer Clifford product code}
    Let $V_1$ and $V_2$ be Hilbert spaces, and suppose that $(G_1,\pi_1) \in \PEM_{V_1}$, and $(G_2,\pi_2) \in \PEM_{V_2}$ are such that $|G_1| = (\dim V_1)^2$, and $|G_2| = (\dim V_2)^2$. If $W_1 \subset V_1$ and $W_2 \subset V_2$ are Clifford codes such that $W_1$ is not a weak stabilizer code, then $W_1 \otimes W_2$ is a non-stabilizer Clifford code.
\end{corollary}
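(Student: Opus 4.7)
The plan is to chain together the product formulas from \cref{prop:product code} with the counting characterization of weak stabilizer Clifford codes from \cref{thm:weak stabilizer Clifford code}. Since the corollary only asks for non-stabilizer, I will in fact establish the stronger statement that $W_1 \otimes W_2$ is not a weak stabilizer code, which implies it is not a stabilizer code.

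First I would verify that the product projective error model sits in the class to which \cref{thm:weak stabilizer Clifford code} applies: \cref{prop:product error models} gives $(G_1 \times G_2, \pi_1 \otimes \pi_2) \in \PEM_{V_1 \otimes V_2}$, and the size assumption is immediate,
\begin{equation*}
    |G_1 \times G_2| = |G_1|\,|G_2| = (\dim V_1)^2 (\dim V_2)^2 = (\dim(V_1 \otimes V_2))^2.
\end{equation*}
\cref{prop:product code} then gives that $W_1 \otimes W_2$ is a Clifford code and supplies the key product formulas for the logical and stabilizer groups, so $|L_{(G_1 \times G_2, \pi_1 \otimes \pi_2)}(W_1 \otimes W_2)| = |L_1|\,|L_2|$ and similarly for the stabilizer subgroup, where I abbreviate $L_i := L_{(G_i,\pi_i)}(W_i)$ and $S_i := S_{(G_i,\pi_i)}(W_i)$.

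The core step is the strict inequality at the level of orders. By \cref{prop:failure of being a weak stabilizer code}, the hypothesis that $W_1$ is not a weak stabilizer code forces $|L_1|\,|S_1| < |G_1|$. For the second factor, \cref{thm:weak stabilizer Clifford code} gives the weaker bound $|L_2|\,|S_2| \leq |G_2|$ regardless of whether $W_2$ is a weak stabilizer code. Multiplying these bounds,
\begin{equation*}
    |L_1 \times L_2|\,|S_1 \times S_2| = |L_1|\,|S_1|\,|L_2|\,|S_2| < |G_1|\,|G_2| = |G_1 \times G_2|.
\end{equation*}
Combined with the product formulas from \cref{prop:product code}, this shows the hypothesis of \cref{thm:weak stabilizer Clifford code} fails for $W_1 \otimes W_2$, so $W_1 \otimes W_2$ is not a weak stabilizer code and hence not a stabilizer code.

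There is really no obstacle here; the proof is a two-line calculation once the correct lemmas are invoked. The only thing one has to be slightly careful about is that \cref{thm:weak stabilizer Clifford code} and \cref{prop:failure of being a weak stabilizer code} both require the central-type condition $|G| = (\dim V)^2$, which is why the hypothesis on both factors is needed: it ensures this condition propagates to the product, letting us apply the counting criterion in both directions.
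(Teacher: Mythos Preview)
Your proof is correct and follows essentially the same route as the paper: establish that the product is a Clifford code via \cref{prop:product code}, use the product formulas for $L$ and $S$, combine the strict inequality $|L_1|\,|S_1| < |G_1|$ from \cref{prop:failure of being a weak stabilizer code} with $|L_2|\,|S_2| \leq |G_2|$, and conclude via \cref{thm:weak stabilizer Clifford code}. The only addition you make is the explicit verification that $|G_1 \times G_2| = (\dim(V_1 \otimes V_2))^2$, which the paper leaves implicit.
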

\begin{proof}
    By \cref{prop:product code} we have that $W_1 \otimes W_2$ is a Clifford code, and by \cref{thm:weak stabilizer Clifford code} we have that $W_1 \otimes W_2$ is also a weak stabilizer code if and only if 
    \begin{equation*}
        |G_1 \times G_2| = |L_{(G_1 \times G_2, \pi_1 \otimes \pi_2)}(W_1 \otimes W_2)|\cdot|S_{(G_1 \times G_2, \pi_1 \otimes \pi_2)}(W_1 \otimes W_2)|.
    \end{equation*}
    By \cref{prop:product code} we can rewrite this equation as
    \begin{equation*}
        |G_1|\cdot|G_2| = |L_{(G_1, \pi_1)}(W_1)|\cdot|L_{(G_2, \pi_2)}(W_2)|\cdot|S_{(G_1, \pi_1)}(W_1)|\cdot|S_{(G_2, \pi_2)}(W_2)|.
    \end{equation*}
    But, by \cref{thm:weak stabilizer Clifford code} and \cref{prop:failure of being a weak stabilizer code} we have that
    \begin{align*}
        |L_{(G_1, \pi_1)}(W_1)|\cdot|S_{(G_1, \pi_1)}(W_1)| & < |G_1|, \text{ and } \\
        |L_{(G_2, \pi_2)}(W_2)|\cdot|S_{(G_2, \pi_2)}(W_2)| & \leq |G_2|.
    \end{align*}
    Hence, $W_1 \otimes W_2$ is a non-stabilizer Clifford code, completing the proof.
\end{proof}

This results allows us to arbitrarily combine the codes from \cref{prop:family of non-stabilizer Clifford codes} to produce more examples of non-stabilizer Clifford codes.

\section{Further questions and remarks}\label{sec:further questions}

We have spent some effort in discussing when a Clifford code is a weak stabilizer code, and visa versa. \cref{thm:weak stabilizer Clifford code} gives a description in the case where $(G, \pi) \in \PEM_V$ is a projective error model such that $|G| = (\dim V)^2$. A natural question is if such a result is possible in the more general case. To be precise, we state the following question.

\begin{question}\label{Q1}
    Let $V$ be a Hilbert space and $(G,\pi) \in \PEM_V$. If $W \subset V$ is a Clifford code, is it possible to determine if $W$ is also a weak stabilizer code by knowing the orders $|L_{(G,\pi)}(W)|$, $|S_{(G,\pi)}(W)|$, and $|G|$?
\end{question}

It is also natural to wonder if all codes that are a Clifford code and a weak stabilizer code is automatically a stabilizer code. In other words, is
\begin{equation*}
    \{\text{Stabilizer codes}\} = \{\text{Clifford codes}\} \cap \{\text{Weak stabilizer codes}\}?
\end{equation*}
To state this question more formally we introduce the following notation. Consider a Hilbert space $V$ and a projective error model $(G,\pi) \in \PEM_V$. Define 
\begin{align*}
    St_{(G,\pi)}(V) & = \{ W \subset V : W \text{ is a stabilizer code with respect to } (G,\pi) \}, \\
    wSt_{(G,\pi)}(V) & = \{ W \subset V : W \text{ is a weak stabilizer code with respect to } (G,\pi) \}, \\
    Cl_{(G,\pi)}(V) & = \{ W \subset V : W \text{ is a Clifford code with respect to } (G,\pi) \}.
\end{align*}

We have been unable to produce an example of a code that is both a Clifford code and a weak stabilizer code but not a stabilizer code. We therefore state the following question.

\begin{question}\label{Q2}
    Does there exist a Hilbert space $V$ and a projective error model $(G,\pi) \in \PEM_V$ such that
    \begin{equation*}
        St_{(G,\pi)}(V) \neq wSt_{(G,\pi)}(V) \cap Cl_{(G,\pi)}(V)?
    \end{equation*}
\end{question}

In the special case where $|G| = (\dim V)^2$, we can restate this question as follows using \cref{thm:weak stabilizer Clifford code}.

\begin{question}\label{Q3}
    Does there exist a Hilbert space $V$, a subspace $W \subset V$, and a projective error model $(G,\pi) \in \PEM_V$ with $|G| = (\dim V)^2$, such that the equation
    \begin{equation*}
        |G| = |L_{(G,\pi)}(W)|\cdot|S_{(G,\pi)}(W)|
    \end{equation*}
    holds, but $S_{(G,\pi)}(W)$ is not normal in $G$?
\end{question}

Finally, it is worth stating that there are some aspects of quantum error correction that we have not discussed but would nevertheless fit into this framework. One of these aspects is the notion of \emph{code distance}. To define code distance, one would need to consider projective error models of the form $(G^n,\pi^{\otimes n})$, as explained in \cref{prop:product error models}. See \cite[Section 5]{GrasslMartínezNicolás10} where they discuss \emph{Hamming weight} and Clifford codes that can detect errors of Hamming weight $1$.

Another type of code that we could have discussed is \emph{subsystem codes}, see \cite{KribsLaflammePoulin2005}. If $V$ is a Hilbert space and $W \subset V$ is a subspace such that $W = W_1 \otimes W_2$ for some Hilbert spaces $W_1$ and $W_2$, we say that $W_1$ is a subsystem code. In this case one would be interested in the following set of detectable errors.
\begin{equation*}
    D^{sub}_{(G,\pi)}(W_1) = \{ x \in G \colon P_W \pi(x) P_W = \id_{W_1} \otimes T_x \text{ for some } T_x \in \calB(W_2) \}.
\end{equation*}
Note that if $W = V^{Cl}(G,\pi,I,\rho)$ is a Clifford code, $L = L_1 \times L_2$, and $\rho \simeq \rho_1 \otimes \rho_2$ for some irreducible projective representations $\rho_i \colon L_i \to U(W_i)$, $i = 1,2$, then $W \simeq W_1 \otimes W_2$. This gives a natural construction of a subsystem code from a Clifford code. With this, one could attempt to prove a result that is analogous to \cref{thm:detectable errors:Clifford code}.

\subsection*{Data availability}

Data sharing is not applicable to this article as no datasets were generated or analyzed during the current study.

\subsection*{Conflict of interest}

The author has no conflicts of interest to declare that are relevant to the content of this article.

\printbibliography

\end{document}